\newtheorem{theorem}{Theorem}[section]
\newtheorem{example}{Example}[section]
\newtheorem{corollary}{Corollary}[section]
\newtheorem{lemma}[theorem]{Lemma}
\newtheorem*{lemma2}{Lemma}
\newtheorem{remark}{Remark}[section]
\newcommand{\X}{\mathcal{X}}
\newcommand{\config}{\boldsymbol{\sigma}}
\newcommand{\calH}{\mathcal{H}}
\newcommand{\XX}{\mathcal{X}\times\mathcal{X}}
\newcommand{\C}{\mathcal{C}}
\begin{document}

\title{The Houdayer Algorithm: Overview, Extensions, and Applications}

\author{Adrien Vandenbroucque}
\email{adrien@entropicalabs.com}
\author{Ezequiel Ignacio Rodr\'iguez Chiacchio}
\email{ezequiel@entropicalabs.com}
\author{Ewan Munro}
\email{ewan@entropicalabs.com}
\affiliation{Entropica Labs, 186b Telok Ayer Street, Singapore 068632}

\begin{abstract}
The study of spin systems with disorder and frustration is known to be a computationally hard task. Standard heuristics developed for optimizing and sampling from general Ising Hamiltonians tend to produce correlated solutions due to their locality, resulting in a suboptimal exploration of the search space. To mitigate these effects, cluster Monte-Carlo methods are often employed as they provide ways to perform non-local transformations on the system. In this work, we investigate the Houdayer algorithm, a cluster Monte-Carlo method with small numerical overhead which improves the exploration of configurations by preserving the energy of the system. We propose a generalization capable of reaching exponentially many configurations at the same energy, while offering a high level of adaptability to ensure that no biased choice is made. We discuss its applicability in various contexts, including Markov chain Monte-Carlo sampling and as part of a genetic algorithm. The performance of our generalization in these settings is illustrated by sampling for the Ising model across different graph connectivities and by solving instances of well-known binary optimization problems. We expect our results to be of theoretical and practical relevance in the study of spin glasses but also more broadly in discrete optimization, where a multitude of problems follow the structure of Ising spin systems.
\end{abstract}

\maketitle

\section{Introduction}
    Discrete optimization problems arise extensively across numerous scientific disciplines and industrial applications~\cite{operations_research_intro, operations_research_and_ai}. An attractive feature of these problems is that they can often be expressed as quadratic unconstrained binary optimization (QUBO) problems, or equivalently, Ising Hamiltonians~\cite{ising_formulation_np_problems, qubo_biology, qubo_chemistry, qubo_logistics, qubo_machine_learning}. Despite the simplicity of their formulation, these problems are NP-hard in general and there is thus no known polynomial-time exact algorithm to solve them~\cite{spin_glasses_complexity}. 

A plethora of heuristics have been developed to gain deeper understanding of these systems, ranging from Monte-Carlo methods~\cite{intro_to_mcmc} to genetic algorithms~\cite{genetic_algo_ising}. Most common approaches, like the Metropolis algorithm~\cite{metropolis_paper}, require few computational resources because they only perform local changes on the state of the system. However, these tend to produce correlated solutions, and can therefore act as biased samplers~\cite{ground_state_stats_annealing_algorithms, exp_biased_ground_state_sampling, no_fair_sampling_quantum_annealing}. This motivates the development of cluster Monte-Carlo algorithms, which perform non-local transformations by altering the state of multiple spins at a time. Leveraging this strength, methods like the Swendsen-Wang~\cite{WANG1990565} and Wolff algorithms~\cite{wolff} are able to mitigate the impact of long autocorrelation times~\cite{critical_slowing_down}, thus allowing to reduce the computational effort needed to simulate spin systems~\cite{beyond_suppression_critical_slowing_down}. Nevertheless, these approaches become less effective in the presence of disorder and frustration, especially when considering high-dimensional lattices~\cite{cluster_algo_spin_glasses}.

The Houdayer algorithm is a cluster Monte-Carlo method that has recently gained strong interest due to its capacity to navigate complex energy landscapes~\cite{Houdayer_2001,cluster_algo_any_space_dimension,icm_generating_more_optimum,quantum_assisted_genetic_algorithm,multiqubit_correction,icm_fair_sampling}. This is achieved by performing non-local moves which preserve the energy of the system and enable ``tunnelling" through large energy barriers. As a result, it has become one of the primary algorithms to solve generic Ising and QUBO problems~\cite{borealis}. Multiple methods exploiting it have also been developed to improve the search of low-energy states~\cite{icm_generating_more_optimum, quantum_assisted_genetic_algorithm, multiqubit_correction} and to obtain a fairer sampling of ground-state configurations~\cite{icm_fair_sampling}. Additionally, for spin-glass Hamiltonians, the Houdayer algorithm can lead to faster thermalization as compared to conventional methods~\cite{Houdayer_2001,cluster_algo_any_space_dimension}. Given these capabilities, it has been used as a reference point in a variety of works benchmarking quantum annealing and population annealing algorithms~\cite{fujitsu_annealer,spin_glass_good_bad_ugly, population_annealing_icm}, as well as quantum-enhanced Monte-Carlo strategies~\cite{ibm_paper}. Further studies have also proposed several variations of it, allowing its use in any space dimension~\cite{cluster_algo_any_space_dimension} or when considering diluted lattices~\cite{cluster_move_diluted_spins, cluster_move_diluted_spins_critical_behavior}. While offering enticing features, the Houdayer algorithm in its original form suffers from a number of impediments, such as a bias towards selecting large clusters and difficulties operating on certain problem geometries~\cite{percolation_signature_spin_glass}.

In this work, we investigate the mechanisms of the Houdayer move and develop tools to overcome its limitations. Our analysis naturally leads to a generalization of the procedure, which we leverage in order to design methods that offer a better exploration of the space of configurations. Some applications of these techniques include sampling from Boltzmann distributions or approximating the ground-state manifold of Ising systems. We also focus on more practical settings such as industrial optimization use-cases, where the main target is often to find a set of reasonable diverse solutions rather than just the optimal one~\cite{optimization_diverse_solutions}. For this reason, we build upon previous work~\cite{icm_generating_more_optimum, multiqubit_correction} and introduce a genetic algorithm which uses Houdayer moves at its core to mix configurations together. This is particularly relevant in a number of scenarios like generating optimal configurations given an initial set of them, or producing feasible solutions in constrained optimization problems.

The rest of the paper is organized as follows. \Cref{section:background} introduces the model considered in this work and the conventions used throughout the text. \Cref{section:original_houdayer_review} is dedicated to a review of the original Houdayer algorithm, highlighting its properties, advantages, and pitfalls. We introduce our generalization in \cref{section:generalized_houdayer} and compare its properties with those of the original algorithm. In \cref{section:applications}, a number of applications are presented, including the use of the algorithm in Markov chain Monte-Carlo methods and the development of a genetic algorithm to improve optimization and sampling. These schemes are put to test in \cref{section:numerical_results}, where we report the performance obtained when sampling from various Ising systems and solving well-known optimization problems.

\section{Background and Definitions} \label{section:background}
    \subsection{Model}
Let $G=(V, E)$ be a graph, and let $\mathcal{X} = \{-1, 1\}^{|V|}$ be the space of all possible assignments of the values $\{-1, 1\}$ to the vertices. In this work, we consider the problem of minimizing a function $\mathcal{H}:\mathcal{X} \to \mathbb{R}$ of the form
\begin{equation}\label{equation:problem_statement}
    \mathcal{H}(\boldsymbol{\sigma}) = -\sum_{(i, j) \in E} J_{ij} \sigma_i \sigma_j - \sum_{i \in V} h_i \sigma_i,
\end{equation} 
where $\boldsymbol{\sigma}\in \mathcal{X}$ and $J_{ij}, h_i\in \mathbb{R}$. 
\iffalse
This is equivalent to writing $\mathcal{H}(\boldsymbol{\sigma}) = \boldsymbol{\sigma}^\top J \boldsymbol{\sigma} + \boldsymbol{h}^\top\boldsymbol{\sigma}$, where $J\in \mathbb{R}^{|V| \times |V|}$ is a strictly upper-triangular matrix containing the values $J_{ij}$ for each $(i, j)\in E$, and the vector $\boldsymbol{h}\in\mathbb{R}^{|V|}$ collects the values $h_1, \dots, h_{|V|}$. 
\fi
We denote by $\mathcal{H}_0=\min_{\boldsymbol{\sigma}\in\mathcal{X}}\mathcal{H}(\boldsymbol{\sigma})$ the minimum value the function can take. Since such a function may have multiple minima, we denote by $\mathcal{X}_0 = \{\boldsymbol{\sigma} \in \mathcal{X}: \mathcal{H}(\boldsymbol{\sigma}) = \mathcal{H}_0\}$ the subset of $\mathcal{X}$ whose elements attain the value $\mathcal{H}_0$. In the following, any reference to some graph $G$ assumes a function $\mathcal{H}$ of interest from which the graph stems. Unless stated otherwise, any mention of some function $\mathcal{H}$ will assume that is in of the form of \cref{equation:problem_statement}.

This well-known class of functions corresponds to Ising Hamiltonians and is used to assign energy to spin systems in the Ising model~\cite{ising_model}. Much of our notation is borrowed from the field of statistical physics and we sometimes refer to $\mathcal{H}$ as Ising Hamiltonian or Ising formulation, but we hope to remain more general in order to emphasize the use of Houdayer moves in optimization procedures that deviate from the more traditional Monte-Carlo methods encountered in the literature.
\begin{remark} \label{remark:ising_one_to_one_mapping}
    More generally, any function of the form of \cref{equation:problem_statement} taking input values $\boldsymbol{\sigma} \in \{a, b\}$ with $a, b \in \mathbb{R}$ and $a \neq b$ can be converted into an equivalent Ising formulation through the one-to-one mapping $f:\{a, b\} \to \{-1, 1\}$ given by the affine function $f(x) = \frac{2}{b-a}x - (b+a)$. This is for example useful when converting QUBO problems into their Ising equivalent.
\end{remark}
We now summarize the nomenclature used throughout the text. For the growth rate of running times, we adopt the standard asymptotic notation $O, \Theta$, and $\Omega$~\cite[Section 2.2]{algorithms_design}. \iffalse For the growth rate of running times, we adopt the notation in~\cite[Section 2.2]{algorithms_design}: given a function $f$ to be estimated (representing the running time of some computation) and a comparison function $g$, we say that $f(n)$ is $O(g(n))$ if there exist constants $c>0$ and $n_0\geq 0$ such that $f(n)\leq c\cdot g(n)$ for all $n\geq n_0$. We say $f(n)$ is $\Omega(g(n))$ if there exist constants $\epsilon>0$ and $n_0\geq 0$ such that $f(n)\geq \epsilon \cdot g(n)$ for all $n\geq n_0$. Finally, we say $f(n)$ is $\Theta(g(n))$ if $f$ is both $O(g(n))$ and $\Omega(g(n))$.\fi For an arbitrary finite set $S$, we denote its power set (i.e., the set of all subsets of $S$) by $2^S$ and the set of probability distributions over it by $\mathcal{P}(S)$. The Kronecker delta function with parameters $i, j\in \mathbb{R}$ is denoted as $\delta_{ij}$, and the indicator function of some set $A$ is denoted as $\mathbbm{1}_{A}$, so that $\mathbbm{1}_{A}(x)=\begin{cases}1,\quad\text{if } x\in A\\0,\quad\text{otherwise}\end{cases}$. When considering the space $\X$, the notion of distance used is the Hamming distance, defined between two configurations $\boldsymbol{\sigma}, \boldsymbol{\tau} \in \X$ as 
\begin{equation}\label{equation:hamming_distance}
    d_{H}(\boldsymbol{\sigma}, \boldsymbol{\tau}) = \sum_{i=1}^n (1-\delta_{\sigma_i \tau_i}).
\end{equation}
The Hamming distance $d_H(\config, T)$ between a configuration $\config \in \X$ and a set of configurations $T\subseteq \X$ is defined as
\begin{equation}
    d_H(\config, T) = \min_{\boldsymbol{\tau} \in T} d_H(\config, \boldsymbol{\tau}).
\end{equation}
Given a finite set of numbers $P\subseteq \mathbb{R}$, the median of the set is denoted as $\mathrm{med}(P)$.
Given a probability measure $\mu$ over some finite set $S$, we denote the expected value of a function $f: S\to \mathbb{R}$ with respect to $\mu$ as 
\begin{equation*}
    \mathbb{E}_{\mu}[f] = \sum_{s\in S}\mu(s)\cdot f(s).
\end{equation*}

\section{Houdayer Algorithm: Review}\label{section:original_houdayer_review}
    We start with a review of the Houdayer algorithm as introduced in \cite{Houdayer_2001}, with the aim of highlighting its properties and limitations.

\subsection{Description and Properties} \label{section:description_houdayer}
Considering the setup presented in \cref{section:background}, the move is executed through the following steps:
\begin{enumerate}
    \item Given a pair of independent configurations of variables $(\boldsymbol{\sigma}^{(1)}, \boldsymbol{\sigma}^{(2)})\in \mathcal{X}\times\mathcal{X}$, compute the local overlap $q_i = \sigma_i^{(1)}\sigma_i^{(2)}$ at every site $i\in V$, resulting in two domains where the overlap value is either -1 or +1.
     
     \item The overlap computed in the previous step defines clusters, which are the connected parts having the same overlap value (we say that two sites $i,j\in V$ are connected if $(i, j)\in E$).
     Select at random a site for which $q_i=-1$ and, in both configurations, flip the values of all the spins in the cluster to which it belongs, yielding a new pair of configurations $(\boldsymbol{\tau}^{(1)}, \boldsymbol{\tau}^{(2)})\in \mathcal{X}\times\mathcal{X}$.
\end{enumerate}
We will refer to clusters with associated value $q_i=-1$ in the local overlap as \textit{Houdayer clusters} (see \cref{remark:houdayer_quadratic_only}). Overall the Houdayer algorithm, as presented originally, has the purpose of providing a random transition from a given pair of configurations to another one, as depicted in the left side of \cref{fig:houdayer}.
\begin{figure*}[ht]
    \begin{minipage}{\columnwidth}
    \includegraphics[width=0.95\columnwidth]{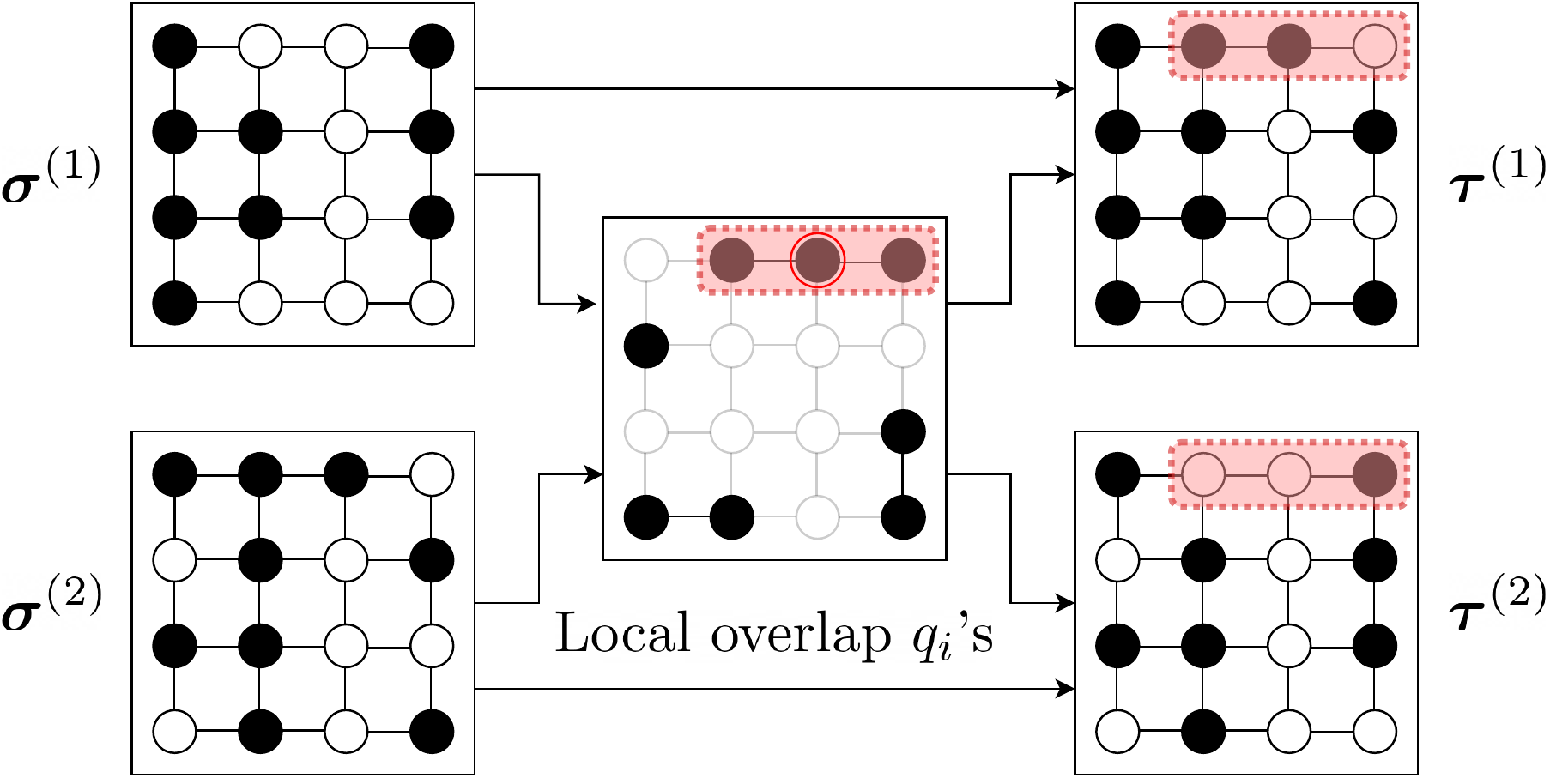}
    \put(-170, 130){\textbf{(a)} Original Houdayer Move}
    \end{minipage}
    \hspace{0.02\columnwidth}
    \unskip \vrule
    \hspace{0.02\columnwidth}
    \begin{minipage}{\columnwidth}
    \includegraphics[width=0.95\columnwidth]{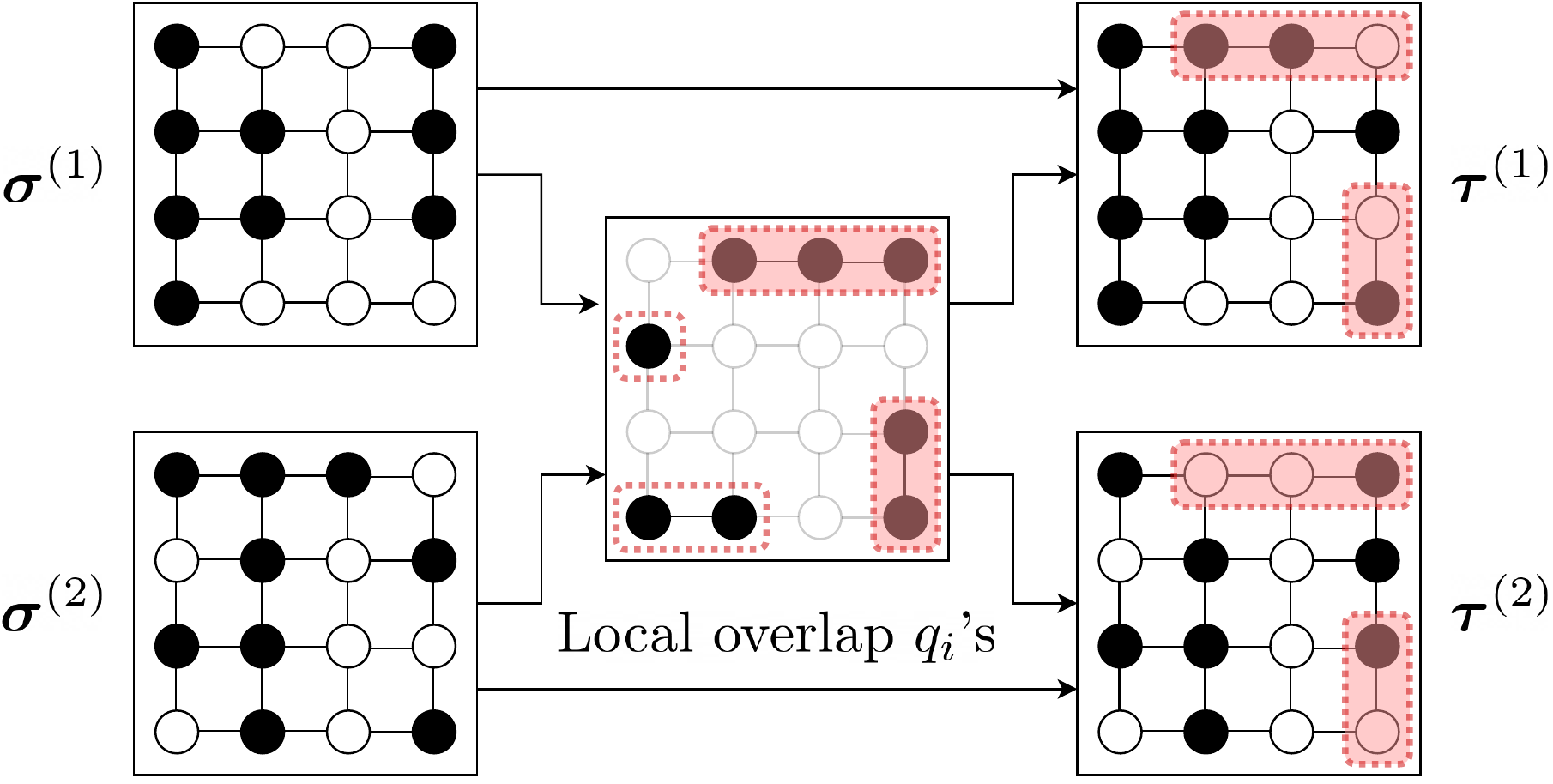}
    \put(-170, 130){\textbf{(b)} Generalized Houdayer Move}
    \end{minipage}
    \caption[Houdayer Moves]{Visual representations of two different versions of the Houdayer move on a $4\times 4$ square lattice graph. Black nodes correspond to variables with value -1, and white nodes to variables with value 1. Both start with a pair of independent configurations of variables $(\boldsymbol{\sigma}^{(1)}, \boldsymbol{\sigma}^{(2)})$ and first compute the local overlap, revealing where variables differ between the two configurations, and defining a set of Houdayer clusters (highlighted with red dashed lines). (a) For the original Houdayer move, one of the sites with overlap value $q_i=-1$ is chosen uniformly at random (circled in red) and the corresponding cluster is grown from there (highlighted in red). (b) For the generalized Houdayer move, a combination of Houdayer clusters is chosen (highlighted in red) at random according to a desired distribution $\nu_{\C}$. Finally, the variables whose sites belong to the selected cluster(s) are flipped, yielding a new pair $(\boldsymbol{\tau}^{(1)}, \boldsymbol{\tau}^{(2)})$.}\label{fig:houdayer}
  \end{figure*}
\begin{remark}
To handle the case of variables taking values in an arbitrary binary alphabet $\{a, b\}$ with $a, b \in \mathbb{R}$ and $a\neq b$, one can more generally define the local overlap at site $i \in V$ as $q_i = \begin{cases}
       1 &\quad\text{if } \sigma_i^{(1)} = \sigma_i^{(2)},\\
       -1 &\quad\text{if } \sigma_i^{(1)} \neq \sigma_i^{(2)}. \\ 
     \end{cases}$. With this definition, step 2 above becomes equivalent to exchanging variables values between the two configurations at sites belonging to the selected Houdayer cluster.
\end{remark}
\begin{remark}\label{remark:houdayer_quadratic_only}
For Ising Hamiltonians with no linear terms (i.e., $h_i=0$), one can also choose in the local overlap clusters with overlap value $q_i=1$ without altering the property that the total function value over the configurations remains constant~\cite{Houdayer_2001}. This however only holds when considering problems in which variables take binary values in $\{-a, a\}$ for some $a \in \mathbb{R}$. Indeed in this case the one-to-one mapping mentioned in \cref{remark:ising_one_to_one_mapping} becomes linear, which does not create linear terms when converting to an Ising formulation.
\end{remark}
The crucial property of this transformation resides in its ability to preserve the total function value of the pair of configurations given as input, as shown in the corollary below.
\begin{corollary}
Let $\mathcal{H}:\mathcal{X}\to \mathbb{R}$ be a function of the form of \cref{equation:problem_statement}. Then for any pair of configurations of variables $(\boldsymbol{\sigma}^{(1)}, \boldsymbol{\sigma}^{(2)})$, the pair of configurations $(\boldsymbol{\tau}^{(1)}, \boldsymbol{\tau}^{(2)})$ resulting from a Houdayer move is such that $\mathcal{H}(\boldsymbol{\sigma}^{(1)}) + \mathcal{H}(\boldsymbol{\sigma}^{(2)}) = \mathcal{H}(\boldsymbol{\tau}^{(1)}) + \mathcal{H}(\boldsymbol{\tau}^{(2)})$.
\end{corollary}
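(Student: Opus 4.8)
The plan is to track how the quantity $\mathcal{H}(\boldsymbol{\sigma}^{(1)}) + \mathcal{H}(\boldsymbol{\sigma}^{(2)})$ changes when the spins of a single Houdayer cluster $C$ are flipped in both configurations. Write $\boldsymbol{\tau}^{(k)}$ for the configuration obtained from $\boldsymbol{\sigma}^{(k)}$ by setting $\tau_i^{(k)} = -\sigma_i^{(k)}$ for $i \in C$ and $\tau_i^{(k)} = \sigma_i^{(k)}$ otherwise, for $k=1,2$. First I would record two structural facts about $C$. (i) By definition $C$ consists of sites with local overlap $q_i = -1$, i.e.\ $\sigma_i^{(1)} = -\sigma_i^{(2)}$, so $\sigma_i^{(1)} + \sigma_i^{(2)} = 0$ for every $i \in C$; the same holds after the flip, since flipping both spins preserves the overlap. (ii) Since $C$ is a connected component of the subgraph induced by the sites with $q_i = -1$, any edge $(i,j)\in E$ with $i \in C$ and $j \notin C$ must have $q_j = +1$, i.e.\ $\sigma_j^{(1)} = \sigma_j^{(2)}$ — otherwise $j$ would itself belong to $C$.

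Next I would expand the difference $\bigl(\mathcal{H}(\boldsymbol{\tau}^{(1)}) + \mathcal{H}(\boldsymbol{\tau}^{(2)})\bigr) - \bigl(\mathcal{H}(\boldsymbol{\sigma}^{(1)}) + \mathcal{H}(\boldsymbol{\sigma}^{(2)})\bigr)$ using \cref{equation:problem_statement} and argue term by term. For the linear part, only sites $i \in C$ contribute a change, and there the contribution is $-h_i\bigl(\tau_i^{(1)} + \tau_i^{(2)} - \sigma_i^{(1)} - \sigma_i^{(2)}\bigr) = 0$ by fact (i). For the quadratic part, an edge $(i,j)$ with both endpoints in $C$ contributes no change because $\tau_i^{(k)}\tau_j^{(k)} = (-\sigma_i^{(k)})(-\sigma_j^{(k)}) = \sigma_i^{(k)}\sigma_j^{(k)}$; an edge with neither endpoint in $C$ is untouched; and an edge with exactly one endpoint, say $i$, in $C$ produces a change proportional to $\sigma_i^{(1)}\sigma_j^{(1)} + \sigma_i^{(2)}\sigma_j^{(2)}$, which vanishes because $\sigma_i^{(1)} = -\sigma_i^{(2)}$ (fact (i)) while $\sigma_j^{(1)} = \sigma_j^{(2)}$ (fact (ii)). Summing all contributions gives a total change of zero, which is the assertion.

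The only subtlety — the one point that genuinely uses the cluster structure rather than just the act of flipping — is fact (ii): that the boundary edges of a Houdayer cluster necessarily join a site of overlap $-1$ to a site of overlap $+1$. This is what forces the cross terms to cancel, and it is where the definition of a cluster as a maximal connected region of constant overlap is essential; everything else is direct substitution. I would also note that the same argument applies verbatim when $C$ is replaced by an arbitrary union of Houdayer clusters, since fact (i) holds at every site of such a union and fact (ii) holds at its boundary, so the statement also covers the generalized move of \cref{fig:houdayer}(b).
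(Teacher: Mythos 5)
Your proof is correct and follows essentially the same route as the paper's: split $\mathcal{H}$ into linear and quadratic parts, observe that only cluster sites and boundary edges contribute a change, and use the two structural facts that $\sigma_i^{(1)}=-\sigma_i^{(2)}$ inside a Houdayer cluster and $\sigma_j^{(1)}=\sigma_j^{(2)}$ just outside it to make both contributions vanish. The paper proves this directly for the generalized move (an arbitrary union of clusters, and for a general binary alphabet via the swap interpretation) and derives the corollary as a special case, which matches the closing observation in your last paragraph.
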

\begin{proof}
The proof follows from the proof of \cref{lemma:extended_houdayer_same_energy} (see \cref{appendix:proof_extended_houdayer_properties}) since the original Houdayer move is a special case of our generalization, for which the same property holds.
\end{proof}
This feature, together with being able to perform large rearrangements of variables, is precisely what makes the Houdayer move an attractive choice in sampling procedures~\cite{Houdayer_2001, cluster_algo_any_space_dimension}. In particular when sampling via Markov chain Monte-Carlo methods, a Houdayer move is always accepted and thus requires small numerical overhead since no random number needs to be generated. Below, we describe two simple properties of the move which are informative of its performance.
\begin{lemma}
Consider a pair of configurations of variables $\left(\boldsymbol{\sigma}^{(1)}, \boldsymbol{\sigma}^{(2)}\right)$ such that their local overlap induces the set of Houdayer clusters $\mathcal{C}$. Then, when performing a Houdayer move, we have that:
\begin{enumerate} \label{lemma:simple_houdayer_properties}
    \item The move is able to reach $|\mathcal{C}| + 1$ unique pairs (including the starting one).
    
    \item \label{property:houdayer_reverse}If $|\mathcal{C}| = 1$, performing the move leads to the reversed pair $\left(\boldsymbol{\sigma}^{(2)}, \boldsymbol{\sigma}^{(1)}\right)$.
\end{enumerate}
\end{lemma}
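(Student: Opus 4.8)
The plan is to reduce both statements to simple bookkeeping about the \emph{supports} of configurations, using the fact that a Houdayer move leaves the local overlap invariant. The first step I would carry out is to record the effect of the move site by site. Suppose the move selects a cluster $C\in\mathcal{C}$. For $i\in C$ we have $q_i=-1$, hence $\sigma_i^{(1)}=-\sigma_i^{(2)}$, so flipping both spins on $C$ gives $\tau_i^{(1)}=-\sigma_i^{(1)}=\sigma_i^{(2)}$ and $\tau_i^{(2)}=-\sigma_i^{(2)}=\sigma_i^{(1)}$; that is, on $C$ the move \emph{swaps} the two replicas, and in particular $q_i=\tau_i^{(1)}\tau_i^{(2)}=-1$ is unchanged there. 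For $i\notin C$ nothing is modified. So the outcome of selecting $C$ is the pair $(\boldsymbol{\tau}^{(1)}_C,\boldsymbol{\tau}^{(2)}_C)$ that coincides with $(\boldsymbol{\sigma}^{(1)},\boldsymbol{\sigma}^{(2)})$ off $C$ and equals the swapped pair on $C$, and it induces exactly the same overlap (hence the same $\mathcal{C}$) as the input.

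For property 1, note that the sites with $q_i=-1$ are precisely $\bigcup_{C\in\mathcal{C}}C$, so the set of possible outcomes is $\{(\boldsymbol{\tau}^{(1)}_C,\boldsymbol{\tau}^{(2)}_C):C\in\mathcal{C}\}$, which has at most $|\mathcal{C}|$ elements (choosing different sites of the same cluster gives the same outcome). To see that these $|\mathcal{C}|$ outcomes are pairwise distinct and all distinct from the starting pair, I would observe that $(\boldsymbol{\tau}^{(1)}_C,\boldsymbol{\tau}^{(2)}_C)$ differs from $(\boldsymbol{\sigma}^{(1)},\boldsymbol{\sigma}^{(2)})$ exactly on the sites of $C$ in each component (the spins on $C$ are genuinely flipped, and $C\neq\emptyset$). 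Since distinct Houdayer clusters are disjoint, $C\neq C'$ forces the corresponding outcomes to disagree with the starting pair on different site sets, hence to be distinct; and each outcome differs from the starting pair because $C$ is nonempty. Counting the starting pair itself then yields $|\mathcal{C}|+1$ distinct pairs.

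For property 2, when $|\mathcal{C}|=1$ the unique cluster $C$ equals $\bigcup_{C'\in\mathcal{C}}C'=\{i\in V:q_i=-1\}$, i.e.\ it is exactly the set of sites where $\boldsymbol{\sigma}^{(1)}$ and $\boldsymbol{\sigma}^{(2)}$ disagree. By the site-level description above, the move swaps the replicas on $C$; on the complement $V\setminus C$ the two replicas already agree, so the (vacuous) action there still satisfies $\tau_i^{(1)}=\sigma_i^{(2)}$ and $\tau_i^{(2)}=\sigma_i^{(1)}$. Hence $\tau_i^{(1)}=\sigma_i^{(2)}$ and $\tau_i^{(2)}=\sigma_i^{(1)}$ for every $i\in V$, i.e.\ $(\boldsymbol{\tau}^{(1)},\boldsymbol{\tau}^{(2)})=(\boldsymbol{\sigma}^{(2)},\boldsymbol{\sigma}^{(1)})$.

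I do not expect a genuine obstacle here: the only point that needs care is the first step — recognizing that on a $q_i=-1$ cluster ``flip both replicas'' is the same as ``swap the replicas'' and therefore the overlap (and the cluster partition) is preserved by the move. Once that is in hand, both claims are immediate from comparing supports of configurations, with the disjointness of distinct clusters supplying the distinctness count in property 1.
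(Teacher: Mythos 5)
Your proof is correct and follows essentially the same route as the paper's own (sketched) argument: each of the $|\mathcal{C}|$ clusters covers a distinct set of disagreement sites and hence yields a distinct output pair, with the starting pair supplying the $+1$, and a single cluster coincides with the full disagreement set so flipping it amounts to exchanging the two configurations. You simply make explicit the site-level observation that on a $q_i=-1$ cluster ``flip both'' equals ``swap,'' and the disjointness of clusters, which the paper leaves implicit in its proof sketch.
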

\begin{proof}[Proof Sketch]
For the first part, since there are $|\mathcal{C}|$ Houdayer clusters to choose from and each of these cluster covers sites where configurations differ, each choice yields a different pair of configurations. In case there is no Houdayer cluster to be chosen in the local overlap (which happens when the two elements of the pair are identical), the resulting pair is the same as the initial one. This amounts to a total of $|\C|+1$ unique pairs.

For the second part, note that if there is only one cluster to choose, then this means that swapping the corresponding variables between the two configurations is equivalent to just exchanging them altogether since a Houdayer cluster represent where variables do not agree between the two configurations.
\end{proof}
The previous lemma gives a precise value for the number of pairs that can be accessed via a Houdayer move. One immediate question that arises is whether this number can be increased, which is the focus of subsequent sections. The second point of \cref{lemma:simple_houdayer_properties} may be seen as some form of limitation since the two configurations arising from the move are the same as the input ones. We discuss this aspect in more depth in the next section. 

\subsection{Limitations}\label{section:houdayer_limitations}
There are a number of important limitations in the original Houdayer move described above.
\begin{itemize}
    \item \textit{The Houdayer move can fail}: Whenever the local overlap induces only one Houdayer cluster, we know from \cref{lemma:simple_houdayer_properties} that the move will simply yield the same pair but reversed, essentially doing nothing. This already raises two questions: how often does this happen, and can it be avoided? A partial answer to the first question resides in how connected $G$ is. Indeed, in the limit where we consider a complete graph, regardless of the local overlap, all sites with local overlap value -1 are necessarily connected, thus forming a single Houdayer cluster. More generally, consider a uniformly random pair of configurations. The clusters formed in the local overlap can be seen as the result of a site percolation process on $G$ with occupation probability $\frac12$ (cf. \cref{appendix:percolation_theory}). Thus, for any graph with percolation threshold smaller than $\frac12$, there will be, with high probability, a single Houdayer cluster spanning the entire graph.
    
    It remains unclear whether one can avoid the formation of such spanning Houdayer clusters in general. Some research along this line has been carried out in the case of the Ising model without linear terms ($h_i=0$) and proposes to solve this problem by restricting the use of Houdayer moves to low temperatures~\cite{cluster_algo_any_space_dimension}. Other works consider artificially increasing the percolation threshold by considering random subgraphs of the original problem graph~\cite{cluster_move_diluted_spins, cluster_move_diluted_spins_critical_behavior}.
    
    \item \textit{The Houdayer move is biased}: Unlike other cluster algorithms like the Wolff or Swendsen-Wang algorithm~\cite{wolff, WANG1990565}, the choice of cluster for the Houdayer move does \textit{not} depend on the weights $J_{ij}$ of the pairwise interactions between variables, but solely on the topology of the graph $G$. During the cluster selection process one first chooses a site with local overlap value $q_i=-1$ and then grow the corresponding cluster. The probability of selecting a particular cluster is directly proportional to its size, thus biasing the move towards selecting larger clusters. Indeed, if the Houdayer clusters are given by $\mathcal{C} = \{C_1, C_2, \dots, C_k\}$, then the probability of selecting cluster $C_i\in \mathcal{C}$ is equal to 
    \begin{equation}\label{equation:original_houdayer_cluster_probability}
        \mathbb{P}(C_i) = \frac{|C_i|}{\sum_{j=1}^k |C_j|}.
    \end{equation}
    There is however no clear reason as to why a larger number of flips should be more desirable in general.
\end{itemize}
In the next section, we resolve some of these issues by extending the Houdayer move, allowing a more uniform selection across clusters together with the possibility to select multiple of them at the same time.

\section{Generalized Houdayer Algorithm}\label{section:generalized_houdayer}
    Multiple variations of the Houdayer algorithm have been explored with the purpose of suppressing its drawbacks. While some of them focus on applying the move only in certain situations to improve its overall efficacy~\cite{cluster_algo_any_space_dimension}, others directly change the inner workings of the algorithm. An example of the latter uses the idea that Houdayer clusters can be grown in a similar way as in the Swendsen-Wang-Wolff algorithm~\cite{cluster_move_diluted_spins, percolation_signature_spin_glass}. In these modified versions, unlike the original move in which connected sites with $q_i=-1$ form a Houdayer cluster, edges are allowed to be removed with a certain probability that depends on the interaction strengths $J_{ij}$. As a result, certain sites will be omitted from the cluster being grown. Despite allowing some form of control over the clusters sizes, these kinds of transformations do not preserve the energy. Here, we design a generalization which retains this property, while extending the possibilities offered by the original Houdayer move.

\subsection{Description and Properties}
In the following, given a pair of configurations $p\in \mathcal{X}\times\mathcal{X}$, we write $\mathcal{C}(p)$ to denote the set of Houdayer clusters defined by the local overlap of $p$. When it is clear from the context what the pair is, we just write $\mathcal{C}$. We now describe our extension of the Houdayer move, which works as follows:
\begin{enumerate}
    \item Given a pair of independent configurations of variables $(\boldsymbol{\sigma}^{(1)}, \boldsymbol{\sigma}^{(2)})\in \mathcal{X}\times\mathcal{X}$, compute the local overlap $q_i = \sigma_i^{(1)}\sigma_i^{(2)}$ at every site $i \in V$ just as in the original Houdayer move, which defines a set of Houdayer clusters $\mathcal{C}$.
     
     \item Select a combination $T \subseteq \mathcal{C}$ of Houdayer clusters according to some desired probability distribution $\nu_{\mathcal{C}} \in \mathcal{P}\left(2^{\C}\right)$ and flip the values of the corresponding variables in both configurations.
\end{enumerate}
Most importantly, the generalized move considers not only flipping a single Houdayer cluster, but any combination of them according to $\nu_{\mathcal{C}}$, including the cases of zero or all clusters. A depiction of the move can be found on the right side of \cref{fig:houdayer}. By enabling the choice of any probability measure over combinations of Houdayer clusters, this allows to control ``how far" we jump in terms of Hamming distance in the configuration space, which can be beneficial in certain applications. Note that each particular choice of $\nu_\C$ leads to a specific instance of our extension. We highlight below three such choices which are repeatedly used in this work and name them for further reference:
\begin{itemize}
    \item \textit{Original Houdayer move}: Denoting by $\mathcal{T}_1 = \{T \subseteq \C: |T|=1\}$ the set of combinations containing exactly one Houdayer cluster, one can recover the original Houdayer move described in \cref{section:description_houdayer} (see \cref{equation:original_houdayer_cluster_probability}) by setting
    \begin{equation}\label{equation:original_houdayer_distribution}
        \nu_{\mathcal{C}}(T) = \mathbbm{1}_{\mathcal{T}_1}(T) \cdot \frac{\sum_{C\in T} |C|}{\sum_{S\in \mathcal{T}_1}\sum_{C\in S} |C|}
    \end{equation}
    for $T\subseteq \C$.
    
    \item \textit{Uniform Cluster Combination (UCC) Houdayer move}: There is a priori no specific reason to restrict ourselves to some particular distribution $\nu_{\mathcal{C}}$ in general. Thus, choosing the uniform distribution
    \begin{equation}\label{equation:extended_houdayer_distribution}
        \nu_{\mathcal{C}}(T)=\frac{1}{2^{|\mathcal{C}|}}
    \end{equation}
    with $T\subseteq \C$ gives access to all accessible pairs without any bias.
    
    \item \textit{Uniform Hamming Distance (UHD) Houdayer move}: Sometimes, it can be desirable to allow biased choices during the cluster selection, for example when performing Markov chain monte-Carlo sampling (see \cref{subsection:improved_sampling}). For a given a set of Houdayer clusters $\C$ with $\sum_{C\in \C}|C|=l$, denote by $\mathcal{S}_i = \{T\subseteq\C : \sum_{C\in T}|C| = i\}$ the set of combinations which cover exactly $i$ variables for $0\leq i\leq n$. The distribution 
    \begin{equation}\label{equation:uhd_distribution}
        \nu_{\C}(T) = \frac{1}{(l+1)\cdot|\mathcal{S}_{t}|},
    \end{equation}
    where $t\triangleq\sum_{C\in T}|C|$, enables to visit configurations uniformly in terms of Hamming distance. A detailed treatment of the motivation behind this choice and its implementation is given in \cref{appendix:optimal_mcmc_houdayer}.
\end{itemize}

In the rest of this work, we refer to our extension as the \textit{generalized Houdayer move}, and we also use the term ``Houdayer moves" to refer to any move resulting from it. The following lemmata show that Houdayer moves also display the central property that the total function value remains constant, and that a number of characteristics of the original move can be extended to the more general case.
\begin{lemma} \label{lemma:extended_houdayer_same_energy}
Consider a pair of configurations of variables $(\boldsymbol{\sigma}^{(1)}, \boldsymbol{\sigma}^{(2)})$ and suppose that a generalized Houdayer move results in the new pair $(\boldsymbol{\tau}^{(1)}, \boldsymbol{\tau}^{(2)})$. Then we have $\mathcal{H}(\boldsymbol{\sigma}^{(1)}) + \mathcal{H}(\boldsymbol{\sigma}^{(2)}) = \mathcal{H}(\boldsymbol{\tau}^{(1)}) + \mathcal{H}(\boldsymbol{\tau}^{(2)})$.
\end{lemma}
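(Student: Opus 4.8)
The plan is to prove the identity deterministically for \emph{every} admissible choice of cluster combination $T\subseteq\mathcal{C}$, so that it then holds automatically for the random move governed by $\nu_{\mathcal{C}}$. Write $A=\bigcup_{C\in T}C\subseteq V$ for the set of sites whose variables are flipped. Since every $C\in\mathcal{C}$ consists only of sites with local overlap $q_i=\sigma_i^{(1)}\sigma_i^{(2)}=-1$, flipping the variable at such a site simultaneously in both configurations is the same as exchanging its value between the two configurations; hence $\tau_i^{(1)}=\sigma_i^{(2)}$ and $\tau_i^{(2)}=\sigma_i^{(1)}$ for $i\in A$, while $\tau_i^{(k)}=\sigma_i^{(k)}$ for $i\notin A$ and $k\in\{1,2\}$.

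Next I would split $\mathcal{H}(\boldsymbol{\sigma}^{(1)})+\mathcal{H}(\boldsymbol{\sigma}^{(2)})$ into its linear and quadratic parts and compare it term by term with $\mathcal{H}(\boldsymbol{\tau}^{(1)})+\mathcal{H}(\boldsymbol{\tau}^{(2)})$. The linear part $-\sum_{i\in V} h_i(\sigma_i^{(1)}+\sigma_i^{(2)})$ is manifestly invariant because for $i\in A$ the two summands are merely swapped and for $i\notin A$ nothing changes. For the quadratic part, fix an edge $(i,j)\in E$ and distinguish three cases according to $|\{i,j\}\cap A|$. If neither endpoint lies in $A$ the term is untouched; if both endpoints lie in $A$ then $\sigma_i^{(1)}\sigma_j^{(1)}+\sigma_i^{(2)}\sigma_j^{(2)}$ becomes $\sigma_i^{(2)}\sigma_j^{(2)}+\sigma_i^{(1)}\sigma_j^{(1)}$, again just swapped. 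The only delicate case is a ``boundary'' edge with exactly one endpoint, say $i$, in $A$; expanding the products shows that the change in that edge's contribution to the pair energy equals $J_{ij}\,(\sigma_i^{(1)}-\sigma_i^{(2)})(\sigma_j^{(1)}-\sigma_j^{(2)})$.

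The main obstacle — in fact the only substantive point — is to show that this boundary contribution always vanishes, and here I would invoke the defining property of Houdayer clusters as the connected components of the subgraph of $G$ induced by the sites with $q=-1$. Since $i\in A$ we have $q_i=-1$, so $\sigma_i^{(1)}\neq\sigma_i^{(2)}$ and the first factor is nonzero. If we also had $q_j=-1$, then $i$ and $j$, being adjacent, would lie in the same Houdayer cluster; as $i$'s cluster belongs to $T$, so would $j$'s, forcing $j\in A$ and contradicting that $(i,j)$ is a boundary edge. Hence $q_j=1$, i.e. $\sigma_j^{(1)}=\sigma_j^{(2)}$, the second factor vanishes, and the boundary term contributes nothing. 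Summing these (zero) per-edge and per-site differences over all of $E$ and $V$ yields $\mathcal{H}(\boldsymbol{\tau}^{(1)})+\mathcal{H}(\boldsymbol{\tau}^{(2)})=\mathcal{H}(\boldsymbol{\sigma}^{(1)})+\mathcal{H}(\boldsymbol{\sigma}^{(2)})$, completing the proof; the corollary for the original Houdayer move then follows by specializing to $|T|=1$.
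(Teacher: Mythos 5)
Your proof is correct and follows essentially the same route as the paper's: the same split into linear and quadratic parts, the same reduction of the quadratic part to boundary edges, and the same factorization $J_{ij}(\sigma_i^{(1)}-\sigma_i^{(2)})(\sigma_j^{(1)}-\sigma_j^{(2)})$ killed by the observation that a neighbor outside the selected clusters must have overlap $q_j=1$. The only cosmetic difference is that the paper states its argument for variables in an arbitrary binary alphabet $\{a,b\}$, whereas you work directly with the exchange interpretation for $\{-1,1\}$; the substance is identical.
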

\begin{proof}
See Appendix \ref{appendix:proof_same_energy_lemma}.
\end{proof}
\begin{lemma}\label{lemma:extended_houdayer_properties}
Consider a pair of configurations of variables $\left(\boldsymbol{\sigma}^{(1)}, \boldsymbol{\sigma}^{(2)}\right)$ such that their local overlap induces the set of Houdayer clusters $\mathcal{C}$. Then, we have that:
\begin{enumerate}
    \item There are $2^{|\mathcal{C}|}$ unique pairs of configurations (including the initial one) that can be reached.
    
    \item \label{property:extended_houdayer_reverse}For any pair $\left(\boldsymbol{\tau}^{(1)}, \boldsymbol{\tau}^{(2)}\right)$ that is reachable via some combination of Houdayer clusters $T\subseteq \mathcal{C}$, the reversed pair $\left(\boldsymbol{\tau}^{(2)}, \boldsymbol{\tau}^{(1)}\right)$ is also reachable, using the ``complement" combination $T^{\mathsf{c}} \triangleq \mathcal{C} \setminus T \subseteq \mathcal{C}$.
\end{enumerate}
\end{lemma}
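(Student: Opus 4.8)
The plan is to package the move as a deterministic map: for a combination $T\subseteq\mathcal{C}$, let $\Phi_T$ be the operation that, at every site lying in some cluster of $T$, exchanges the two configurations' values there (equivalently, flips both spins in that cluster), and leaves all other sites untouched. The first thing I would establish is a \emph{stability} fact: since each Houdayer cluster consists of sites where $\boldsymbol{\sigma}^{(1)}$ and $\boldsymbol{\sigma}^{(2)}$ disagree, and exchanging the two values at a disagreeing site keeps it disagreeing, the local overlap is unchanged by any $\Phi_T$. Hence every reachable pair has exactly the same cluster decomposition $\mathcal{C}$, every pair reachable by a sequence of moves is reachable by a single $\Phi_T$, and it is legitimate to compare reachable pairs coordinate by coordinate against $(\boldsymbol{\sigma}^{(1)},\boldsymbol{\sigma}^{(2)})$.

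For part~1, I would show that $T\mapsto \Phi_T(\boldsymbol{\sigma}^{(1)},\boldsymbol{\sigma}^{(2)})$ is injective on $2^{\mathcal{C}}$. If $T\neq T'$, choose a cluster $C$ in their symmetric difference, say $C\in T\setminus T'$, and a site $i\in C$. At site $i$, $\Phi_T$ produces the ordered pair $(\sigma^{(2)}_i,\sigma^{(1)}_i)$, while $\Phi_{T'}$ produces $(\sigma^{(1)}_i,\sigma^{(2)}_i)$ because, the clusters being disjoint, $i$ lies in no cluster of $T'$; since $q_i=-1$ forces $\sigma^{(1)}_i\neq\sigma^{(2)}_i$, these ordered pairs are distinct, so the two resulting configuration pairs differ. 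Injectivity then gives exactly $2^{|\mathcal{C}|}$ distinct reachable pairs, and $T=\emptyset$ yields the initial one.

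For part~2, write $(\boldsymbol{\tau}^{(1)},\boldsymbol{\tau}^{(2)})=\Phi_T(\boldsymbol{\sigma}^{(1)},\boldsymbol{\sigma}^{(2)})$ and compare $\Phi_{T^{\mathsf{c}}}(\boldsymbol{\sigma}^{(1)},\boldsymbol{\sigma}^{(2)})$ with $(\boldsymbol{\tau}^{(2)},\boldsymbol{\tau}^{(1)})$ site by site, using that the sites of $T$ and of $T^{\mathsf{c}}$ partition the disagreement set $\bigcup\mathcal{C}$. At a site where the configurations agree, neither map acts and both sides read $(\sigma^{(1)}_i,\sigma^{(1)}_i)$. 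At a site in a cluster of $T$, only $\Phi_T$ swaps, so the left side is $(\sigma^{(1)}_i,\sigma^{(2)}_i)$ while $(\tau^{(2)}_i,\tau^{(1)}_i)=(\sigma^{(1)}_i,\sigma^{(2)}_i)$. At a site in a cluster of $T^{\mathsf{c}}$, only $\Phi_{T^{\mathsf{c}}}$ swaps, so the left side is $(\sigma^{(2)}_i,\sigma^{(1)}_i)=(\tau^{(2)}_i,\tau^{(1)}_i)$. Agreement at every site proves $\Phi_{T^{\mathsf{c}}}(\boldsymbol{\sigma}^{(1)},\boldsymbol{\sigma}^{(2)})=(\boldsymbol{\tau}^{(2)},\boldsymbol{\tau}^{(1)})$. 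I would probably also record the cleaner abstract reason, since it reinforces part~1: the maps $\{\Phi_T\}_{T\subseteq\mathcal{C}}$ form a group isomorphic to $(\mathbb{Z}/2)^{|\mathcal{C}|}$ under $\Phi_T\circ\Phi_{T'}=\Phi_{T\triangle T'}$ (disjointness of clusters), $\Phi_{\mathcal{C}}$ is the pair swap, and $T\triangle T^{\mathsf{c}}=\mathcal{C}$, which immediately gives the statement.

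The only real obstacle is the preliminary invariance claim — that flipping both spins in a Houdayer cluster equals exchanging values at those sites and preserves disagreement there, so that $\mathcal{C}$ is stable and the notion of ``reachable'' is unambiguous. Once that is stated explicitly, both parts reduce to routine coordinate comparisons; I would be careful to invoke it up front, since both arguments silently assume that every reachable pair shares the Houdayer clusters of the starting pair.
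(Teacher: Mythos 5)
Your proof is correct and follows essentially the same route as the paper's: subsets of $\mathcal{C}$ enumerate the reachable pairs, and the complement combination $T^{\mathsf{c}}$ is checked site by site (agreement sites, sites covered by $T$, sites covered by $T^{\mathsf{c}}$) to produce the reversed pair. The one place you go beyond the paper is part~1, where you prove injectivity of $T\mapsto\Phi_T\left(\boldsymbol{\sigma}^{(1)},\boldsymbol{\sigma}^{(2)}\right)$ explicitly via a site in the symmetric difference and the fact that $q_i=-1$ forces $\sigma^{(1)}_i\neq\sigma^{(2)}_i$, whereas the paper simply asserts that each combination yields a distinct pair; that explicit step (and the $(\mathbb{Z}/2)^{|\mathcal{C}|}$ framing) is a worthwhile tightening rather than a different approach.
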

\begin{proof}
See Appendix \ref{appendix:proof_extended_houdayer_properties}.
\end{proof}
\begin{remark}
Note the following:
\begin{enumerate}
    \item If the order of the pair is irrelevant for a certain application, then \cref{property:extended_houdayer_reverse} in \cref{lemma:extended_houdayer_properties} implies that half of the pairs that are reachable can be considered identical. To find all of them, one only needs to consider half of all possible combinations.
    \item \cref{property:extended_houdayer_reverse} in \cref{lemma:extended_houdayer_properties} provides extension of \cref{property:houdayer_reverse} in \cref{lemma:simple_houdayer_properties}, explaining in general why the choice of the combination $T = \mathcal{C}$ (which covers all the Houdayer clusters) amounts to exchanging configurations in the input pair. Specifically, this is due to the fact that the complement $T^{\mathsf{c}}$ of the combination $T = \mathcal{C}$ is the empty combination $T^{\mathsf{c}} = \mathcal{C} \setminus T = \emptyset$, which itself just results in the input pair.
\end{enumerate}
\end{remark}

\subsection{Computational Complexity}
Discovering Houdayer clusters in a given local overlap grants access to a number of pairs of configurations with the same total function value, and this can be easily achieved via a Breadth-First Search (BFS) algorithm with time complexity $O(|V|+|E|)$~\cite{intro_to_algorithms}. After that, the choice of an accessible pair according to distribution $\nu_{\C}$ needs to be made. This can be computationally challenging since the number of such pairs is exponentially large in the number of Houdayer clusters. It turns out that for many distributions $\nu_{\C}$, in particular our three recurring examples, only few computational resources are necessary:
\begin{itemize}
    \item \textit{Original Houdayer move}: The distribution $\nu_{\mathcal{C}}(T) = \mathbbm{1}_{\mathcal{T}_1}(T) \cdot \frac{\sum_{C\in T} |C|}{\sum_{T\in \mathcal{T}_1}\sum_{C\in T} |C|}$ is achieved by performing the original Houdayer move. Since this only requires to grow one of the Houdayer clusters, the time complexity is $O(|V|+|E|)$ in the worst case, the same as a BFS algorithm.
    \item \textit{UCC Houdayer move}: The distribution $\nu_{\mathcal{C}}(T) = \frac{1}{2^{|\mathcal{C}|}}$ is also achieved efficiently. After discovering all Houdayer clusters, a uniformly random combination of them can be chosen by considering each cluster independently and selecting it with probability $1/2$. \iffalse After discovering all Houdayer clusters, one can choose a uniformly random combination of them through the following steps:
        \begin{enumerate}
            \item Pick randomly the number of Houdayer clusters to be part of the combination, according to a binomial distribution. Specifically, denoting by $N$ the random variable representing this number, we use $N\sim \mathrm{Bin}(|\mathcal{C}|, \frac12)$, so that $\mathbb{P}(N=i) = \frac{\binom{|\mathcal{C}|}{i}}{2^{|\mathcal{C}|}}$.
            \item Sample uniformly at random $N$ Houdayer clusters from $\mathcal{C}$.
        \end{enumerate}
    To verify that it works correctly, consider any combination $T\subseteq \mathcal{C}$. In order for it to be selected through the process defined above, we must first have that $N=|T|$ and then, out of all combinations of size $|T|$, choose precisely $T$. Formally, the resulting distribution $\nu_\C$ is
    \begin{align*}
        \nu_{\mathcal{C}}(T) &= \mathbb{P}(N = |T| \cap \text{ select } T) \\
        &= \mathbb{P}(N = |T|) \cdot \mathbb{P}(\text{select } T | N=|T|) \\
        &= \frac{\binom{|\mathcal{C}|}{|T|}}{2^{|\mathcal{C}|}} \cdot \frac{1}{\binom{|\mathcal{C}|}{|T|}} \\
        &= \frac{1}{2^{|\mathcal{C}|}},
    \end{align*}
    as expected.\fi Hence, the running time of this move is dominated by the time required to compute all Houdayer clusters, which is $O(|V|+|E|)$. We note that in this case, the resulting algorithm is similar to the one presented in~\cite{graphical_representation_cluster_algo}.
    \item \textit{UHD Houdayer move}: The distribution in \cref{equation:uhd_distribution} is achieved by first computing all Houdayer clusters. After that, one needs to create and fill a table via dynamic programming (details can be found in \cref{appendix:optimal_mcmc_houdayer}). In the worst-case, the number of Houdayer clusters is equal to $|V|$ (if the graph is the empty graph), so that filling the table takes time $O(|V|^2)$. Then, retrieving a combination from the table takes time $O(|V|)$, leading to an overall running time of $O(|V|^2)$.
\end{itemize}

\subsection{Expected Number of Flips}\label{section:expected_number_of_flips}
% TODO: change the beginning to not repeat what is in Applications
We take a short detour to analyze the number of flips performed during a Houdayer move. This can be useful in order to provide fair comparisons between methods that uses Houdayer moves and ones that do not. When studying Monte-Carlo dynamics for example (see \cref{subsection:improved_sampling}), such comparisons can be achieved when said methods flip on average the same number of variables.

First, a simple but crucial question to ask is: given a pair of configurations, how is the number of flips calculated for the generalized Houdayer move? For example, defining the number of flips to be the number of sites covered by the chosen combination of Houdayer clusters would not be representative of how the dynamics evolves. Indeed, in the case where all Houdayer clusters are selected, the move effectively performs zero flips since it only exchanges the configurations in the pair. Instead, since \cref{property:extended_houdayer_reverse} in \cref{lemma:extended_houdayer_properties} more generally states that choosing a combination $T\subseteq \C$ or its complement $T^c\subseteq \C$ essentially results in the same configurations at the output, the true number of flips performed via that combination should be taken as
\begin{equation}\label{equation:true_number_flips}
    \eta_{\C}(T) \triangleq \min\left(\sum_{C\in T}|C|, \sum_{C\in T^{\mathsf{c}}}|C|\right),
\end{equation}
i.e., the minimum number of flips between $T$ and $T^{\mathsf{c}}$. Hence, given a pair $p\in\XX$, the expected number of flips becomes
\begin{equation*}
    \eta(p) \triangleq \mathbb{E}_{\nu_{\C(p)}}[\eta_{\C(p)}] = \sum_{T\subseteq \C(p)}\nu_{\C(p)}(T) \cdot \eta_{\C(p)}(T).
\end{equation*}
With this, we find that the expected number of flips performed by a generalized Houdayer move when pairs appear according to some distribution $\mu \in \mathcal{P}(\XX)$ is given by
\begin{equation}\label{equation:expected_number_flips}
    \eta_{\mu} \triangleq \mathbb{E}_{\mu}[\eta].
\end{equation}
Since $\mu$ might be dependent on the problem at hand, we restrict ourselves to the special case where a uniformly random pair of configurations is assumed to be given as input, effectively considering $\mu(p)=\frac{1}{2^{2|V|}}$ for $p\in \XX$. In this case, as already noted in Section \ref{section:houdayer_limitations}, the assignment of local overlap values $q_i$ becomes equivalent to a site percolation process, so that the study of number of flips morphs into the study of the percolation process itself. Such an analysis can easily become involved even for simple graph topologies. We illustrate this by comparing the expected number of flips of different versions of the Houdayer move for the 1D graph.
\begin{example}[1D Graph]\label{example:1d_graph}
Let $G=(V, E)$ with $V=\{1, \dots, |V|\}$ and $E=\{(i, i+1): i=1, \dots, |V|-1\}$. Then, the site percolation process can be solved analytically when assuming $|V| \to +\infty$ (see Appendix \ref{appendix:1d_graph}), which allows to compute the expected number of flips in a number of cases:
\begin{itemize}
    \item Using the original Houdayer move yields
    \begin{equation}
        \eta_{\mu} = 3.
    \end{equation}
    \item Using the UCC Houdayer move yields
    \begin{equation}
        \frac{|V|}{4}\left(1-\sqrt{\frac{6}{|V|}}\right) \leq \eta_{\mu} \leq \frac{|V|}{4},
    \end{equation}
    but the exact closed-form expression (see Appendix \ref{appendix:1d_graph}) can be evaluated numerically.
\end{itemize}
This shows that the original Houdayer move flips 3 variables on average, while the version of the Houdayer move which considers all possible combinations uniformly at random flips $\Theta(|V|)$ variables on average. To see how precise these closed-form expression are in practice for finite graphs, we approximate for increasing values of $|V|$ the expected number of flips by sampling uniformly random pairs of configurations and applying Houdayer moves while keeping track of the sample average of the number of flips performed. The results depicted in Fig. \ref{fig:expected_number_flips} display that the closed-form expressions match the actual behavior well even when $|V|$ is small.
\begin{figure}
    \centering
    \includegraphics[width=\columnwidth]{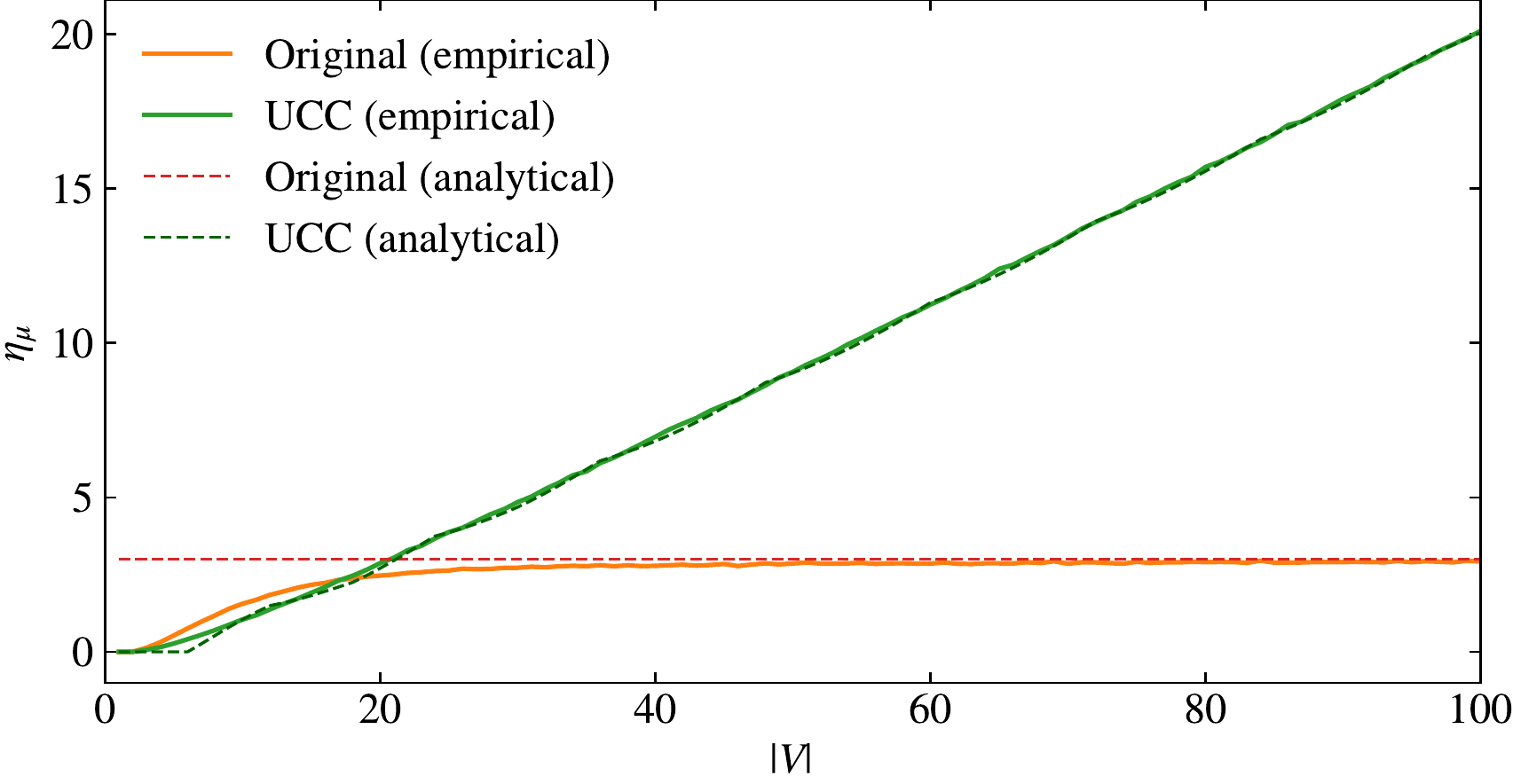}
    \caption{Comparison of the expected number of flips performed by the original and UCC version of the Houdayer move on the 1D graph as a function of $|V|$ to illustrate Example \ref{example:1d_graph}. The values of interest are estimated by sampling 1000 pairs of configurations uniformly at random and averaging for each move version the number of flips performed (solid lines). The expected number of flips given by the closed-form formulas derived in Appendix \ref{appendix:1d_graph} is also plotted (dashed lines).}
    \label{fig:expected_number_flips}
\end{figure}
\end{example}
Example \ref{example:1d_graph} reveals that even in a simple scenario, the number of expected flips performed by a generalized Houdayer move can vary quite extensively depending on the choice of $\nu_{\C}$. Hence, for general graphs, and in particular finite ones, finding a closed-form expression for the expected number of flips is far from trivial, rendering it difficult to determine what a fair comparison with other methods would entail.

\section{Applications} \label{section:applications}
In the previous section, we introduced the notion of the generalized Houdayer move and analyzed some of its key features and properties. In this section, we discuss how Houdayer moves can be used in the contexts of sampling from a probability distribution, and optimization.

    \subsection{Improved Markov Chain Monte-Carlo Sampling} \label{subsection:improved_sampling}
        The problem of sampling from a distribution is of paramount importance in a multitude of disciplines. However in high dimensions, usual sampling strategies such as rejection sampling or importance sampling become inefficient due to the curse of dimensionality \cite[Chapter 29]{MacKay2003}. The procedures of choice in this case are Markov chain Monte-Carlo methods, and in particular the Metropolis-Hastings (MH) algorithm~\cite{metropolis_hastings}. As a reminder, the MH algorithm consists of building an ergodic (irreducible and aperiodic) Markov chain which has the desired distribution as its stationary and limiting distribution \cite{metropolis_hastings}. In practice, the process can be summarized in two steps which are iterated, starting from a random configuration $\boldsymbol{\sigma}$:
\begin{enumerate}
    \item Propose a transition $\boldsymbol{\sigma} \to \boldsymbol{\tau}$ to a new configuration according to some Markov chain with transition matrix $Q$.
    \item Accept or reject the new configuration $\boldsymbol{\tau}$ with probability $A_{\boldsymbol{\sigma} \to \boldsymbol{\tau}}$ following a certain acceptance scheme.
\end{enumerate}
Typically, the distribution of interest is the Boltzmann-Gibbs distribution stemming from the Hamiltonian $\mathcal{H}$, which is given by
\begin{equation}\label{equation:boltzmann_distribution}
    \mu_\beta(\boldsymbol{\sigma}) = \frac{e^{-\beta \mathcal{H}(\boldsymbol{\sigma})}}{Z_{\beta}}, 
\end{equation}
where $\beta > 0$ is a parameter called the inverse temperature and $Z_{\beta} = \sum_{\boldsymbol{\sigma}\in \X} e^{-\beta \mathcal{H}(\boldsymbol{\sigma})}$ is a normalization constant. The acceptance mechanism commonly used is the so-called Metropolis criterion given by $A_{\boldsymbol{\sigma} \to \boldsymbol{\tau}}=\min\left(1, \frac{\mu_\beta(\boldsymbol{\tau})\cdot Q_{\boldsymbol{\tau}\to \boldsymbol{\sigma}}}{\mu_\beta(\boldsymbol{\sigma})\cdot Q_{\boldsymbol{\sigma}\to \boldsymbol{\tau}}}\right)$, which in our setting reduces to 
\begin{equation}\label{equation:metropolis_acceptance}
    A_{\boldsymbol{\sigma} \to \boldsymbol{\tau}}=\min(1, e^{-\beta (\mathcal{H}(\boldsymbol{\tau})-\mathcal{H}(\boldsymbol{\sigma}))})
\end{equation}
when $Q$ is symmetric. Other acceptance schemes can be used in order to tune the acceptance rate of the process~\cite[Section 11.2.1]{markov_chains_bremaud}.
\begin{remark}
It is well known that when $\beta \to \infty$, the corresponding Boltzmann-Gibbs distribution is nothing but $\mu_{\infty}(\config) = \frac{\mathbbm{
1}_{\calH_0}(\config)}{|\calH_0|}$, i.e., the uniform distribution over the set of configurations that minimize $\calH$~\cite[Eq. (2.17)]{information_physics_computation}. In such a case, instead of directly running the MH algorithm with infinite $\beta$, one typically uses simulated annealing~\cite{simulated_annealing_paper}, a process which starts at low values of $\beta$ and each iteration sees its value increase. After enough steps, $\beta$ is deemed large enough so that the distribution sampled is close to $\mu_{\infty}$.
\end{remark}
Two crucial properties determine the capacity of the MH algorithm to correctly sample from the target distribution in~\cref{equation:boltzmann_distribution}:
\begin{itemize}
    \item \textit{High-Dimensionality}: Regardless of the algorithm employed, sampling from a high-dimensional distribution is a particularly demanding task because the space to explore becomes extremely large. Obtaining good samples requires to explore most if not all the configurations in the space to find regions with non-zero probability.
    
    \item \textit{Complexity of $\mathcal{H}$}: As $\beta$ increases, the distribution concentrates more and more on configurations with lower values of $\mathcal{H}$ while the support of the distribution itself shrinks to only those configurations, resulting in multiple disjoint regions with non-zero probability. For example, when $\mathcal{H}$ contains multiple minima around the same value, this can result in a multimodal distribution which is extremely hard to sample as it can be difficult to find and accept distant transitions leading to other modes of the distribution.
\end{itemize}
In summary, a good MH algorithm requires transitions that allow the exploration of configurations both close to and far from the current one in terms of Hamming distance, in order to avoid being trapped in a particular neighborhood. At the same time, transitions need to be accepted to be successful. This requires $\mathcal{H}(\boldsymbol{\tau})-\mathcal{H}(\boldsymbol{\sigma})$ to be either negative or a small positive number since the probability of acceptance decays exponentially fast with respect to it. Thus, along with the usage of the MH comes the quest of finding the most suitable transition proposals that will enable the mixing process to be done as fast as possible.

One of the most widely used and analyzed (mainly due to its meager computational requirements) is the so-called \textit{single-flip} strategy, which proposes to flip the value of one variable chosen uniformly at random. The issue with this approach is that samples tend to be correlated due to the locality of the move. In particular for $\beta \gg 0$, it can be extremely difficult to discover new regions with non-zero probability since after reaching a local minimum of $\mathcal{H}$, all proposals will be rejected since local moves can only increase the function value.

\subsubsection{Sampling and Houdayer moves}
We now explain how one can use Houdayer moves within the MH algorithm in order to benefit from their non-locality.

Since the Houdayer moves work with pairs of configurations, we need to redefine the target distribution so that it assigns probability to elements in $\X\times\X$. We thus choose to employ the joint distribution over the space of independent pairs of configurations $\X\times\X$ at the same inverse temperature $\beta$, given by 
\begin{align*}
    \mu^{\text{pairs}}_\beta\left(\left(\boldsymbol{\sigma}^{(1)}, \boldsymbol{\sigma}^{(2)}\right)\right)& = \mu_\beta\left(\boldsymbol{\sigma}^{(1)}\right) \cdot \mu_\beta\left(\boldsymbol{\sigma}^{(2)}\right) \\
    &= \frac{e^{-\beta (\mathcal{H}(\boldsymbol{\sigma}^{(1)}) + \mathcal{H}(\boldsymbol{\sigma}^{(2)}))}}{Z^{\text{pairs}}_{\beta}},
\end{align*}
where $Z^{\text{pairs}}_{\beta}$ is a normalization constant. Alternatively, one can view this as the usual Boltzmann-Gibbs distribution derived from the total Hamiltonian $\mathcal{H}^{\text{pairs}}\left(\left(\boldsymbol{\sigma}^{(1)}, \boldsymbol{\sigma}^{(2)}\right)\right) = \mathcal{H}(\boldsymbol{\sigma}^{(1)}) + \mathcal{H}(\boldsymbol{\sigma}^{(2)})$.

We can now apply the framework of the MH algorithm and use Houdayer moves to propose transitions. By exploring the space of pairs of configurations while staying at a same overall function value, Houdayer moves offer a unique exploration scheme. Specifically, by allowing transitions to new pairs of configurations at the same total function value and the same temperature, the move is guaranteed to be accepted. In addition, its ability to reach pairs which can be substantially different in terms of Hamming distance can be highly beneficial when the process is stuck in some region of the space.

An important point with this approach is that the Houdayer move on its own does not make the process ergodic, since the resulting Markov chain is not irreducible. Indeed, starting from some initial pair of configurations and applying only the Houdayer move, all new pairs visited will have the same value of $\mathcal{H}^{\text{pairs}}$. Clearly, this will not result in an exploration of the entire space $\X\times\X$ in general. Thus, one needs an extra step to render the whole scheme ergodic, the simplest one being to additionally perform single-flip moves on each configuration in the pair independently. In Section \ref{subsection:improved_sampling_experiment}, we explicitly demonstrate that using such a combined strategy with the generalized Houdayer move improves sampling.
\begin{remark}
    One can more generally start with $R\geq 2$ configurations at the same inverse temperature $\beta$. The process remains similar, and two steps are performed at each iteration. First, the single-flip strategy is applied on each one of the $R$ configurations independently. Then, they are paired together so that Houdayer moves can be performed on each created pair.
\end{remark}
Note that an issue arises when comparing the usual single-flip procedure with the one augmented with Houdayer moves. Since Houdayer moves can only help to reach the target distribution faster, it would not be fair to compare the two dynamics directly. One way to solve it is to make both transformations propose to flip the same number of variables on average. On one hand, the single-flip procedure ensures that all move proposals change the value of exactly one variable, while on the other hand, Houdayer moves can in principle flip the value of all $|V|$ variables. Even though this might indicate to compare one application of a Houdayer move to $|V|$ single-flips, we know from \cref{section:expected_number_of_flips} that Houdayer moves can in reality flip widely different number of variables. In fact, estimating the expected number of flips they induce is a difficult task since it depends on the topology of the graph $G$, the distribution $\mu$ over pairs, and the distribution $\nu_{\C}$. In the context of sampling from the Boltzmann-Gibbs distribution via the MH algorithm, since the pairs encountered become increasingly distributed according to the target distribution $\mu^{\text{pairs}}_{\beta}$, the expected number of flips after enough iterations is given by $\eta_{\mu_{\beta}^{\text{pairs}}}$. Since evaluating this quantity requires to sample from $\mu_{\beta}^{\text{pairs}}$, the problem we started with, this cannot be determined prior to running the MH algorithm. 

Although it is not possible to easily determine the expected number of flips performed by Houdayer moves in general, one can still find ways to increase the fairness when comparing it to single-flips. In practice, a commonly used scheme is to apply one Houdayer move every $|V|$ single-flips instead of applying it after every single-flip.

    \subsection{Low-Energy Configurations Generation: A Genetic Algorithm} \label{subsection:genetic_algorithm}
        We now present a genetic algorithm leveraging Houdayer moves that can be easily customized to perform several tasks, most of which are used to improve the quality of existing solutions. A similar idea has been explored in \cite[Section II.A]{quantum_assisted_genetic_algorithm} to yield a quantum-assisted genetic algorithm, but unnecessarily restricts the set of reachable pairs by using the original form of the Houdayer move. Besides, the generality of our procedure will allow us to recover the algorithm presented in \cite[Section II.A]{icm_generating_more_optimum}.

There exist various ways to formulate genetic algorithms, the core idea being that they provide a procedure which maintains a population of configurations (called individuals) which are usually initialized by drawing random configurations from the solution space. The population evolves in a way which attempts to imitate evolution, ideally tending towards configurations of low function value \cite{quantum_assisted_genetic_algorithm}. The three main steps that the algorithm iterates over are (see \cref{fig:genetic_algorithm}):
\begin{enumerate}
    \item \textit{Crossover}, in which individuals are combined to produce offspring with attributes derived from the parent individuals. Here, we perform this step by pairing configurations and considering all reachable pairs stemming from their local overlap.
    
    \item \textit{Mutation}, in which individuals go through random transformations which produce new individuals.
    
    \item \textit{Selection}, in which individuals that are the fittest (according to some criteria) are retained while others are removed in order to regulate the size of the population.
\end{enumerate}
The simplicity of this heuristic method makes it easy to implement in practice, and thus offers an attractive choice when solving optimization problems.

\subsubsection{Algorithm Description}\label{section:ga_description}
\begin{algorithm}
\SetKwInOut{Input}{Input}
\caption{Genetic Algorithm for Improved Sampling}\label{alg:genetic_algorithm}

\Input{Number of iterations $N$, Initial population $P$, Threshold $k$ for the population size limit}

\For{$n \in \{1, \dots, N\}$}{

    Construct a set of pairs of configurations $R$ from $P$.\\
    \For{$r \in R$}{
        Compute the local overlap and generate a set  $C$ of new pairs of configurations. \\
        Apply the mutation operator on each $c \in C$ with some probability, yielding a set of configurations $M$.\\
        $P \gets P \cup M$\\
    }
    Retain $k$ configurations from $P$ with lowest energy values.\\ 
}
\Return $P$
\end{algorithm}
The procedure is described in a general way in \cref{alg:genetic_algorithm}, and simply follows the steps of any genetic algorithm while using some of the tools developed in previous sections. It possesses a number of parameters and degrees of freedom which make the scheme suitable in a variety of situations. Below, we consider how the modification of these parameters affects the overall running time.
\begin{figure*}[ht]
\centering
\includegraphics[width=\textwidth]{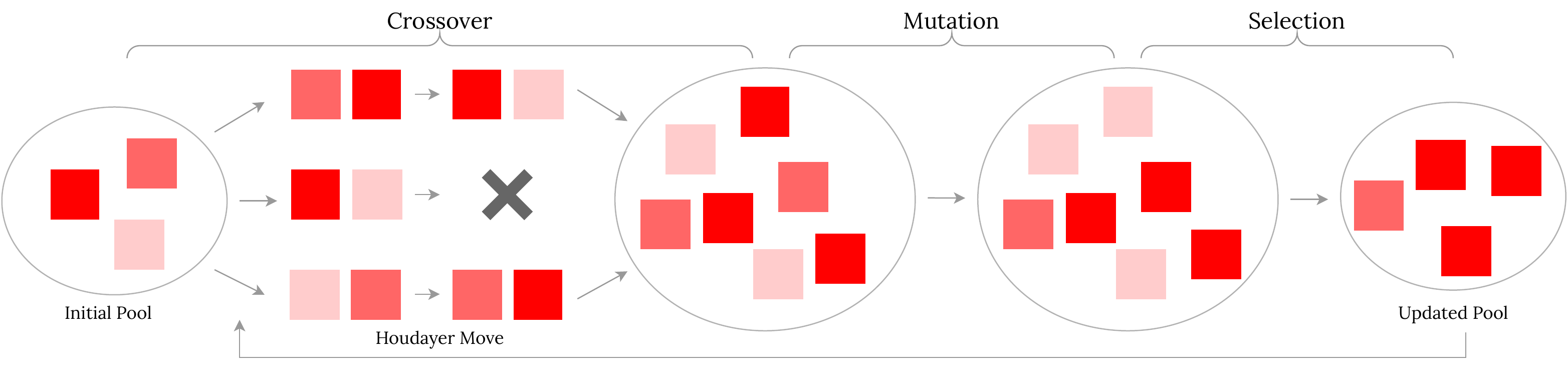}
\caption[]{Visual representation of Algorithm \ref{alg:genetic_algorithm}, which starts from an initial pool of configurations, and iterates over three distinct steps: crossover, mutation, and selection (see Section \ref{subsection:genetic_algorithm}). Darker shades of red indicate a lower function value.}
\label{fig:genetic_algorithm}
\end{figure*}
\begin{itemize}
    \item Constructing the set of pairs $R$ given the current pool of configurations can be subject to many modifications which alter the overall running time of the method. We briefly mention some of them here. Given a pool $P$, one can consider all possible $\binom{|P|}{2}=O(|P|^2)$ pairs of distinct configurations, however this can become costly if the pool size is large. One can instead randomly pair configurations in the pool, yielding $\lfloor \frac{|P|}{2} \rfloor = O(|P|)$ pairs. Finally, one could instead sample a constant number of random pairs from the pool.
    
    \item The crossover step may also be modified in different ways. Once the local overlap is computed for a pair, there is freedom in choosing how new pairs should be reached. One can either follow the steps of the original Houdayer move, focusing on pairs arising from flipping one Houdayer cluster at a time; or, one could also consider any combination of Houdayer clusters as in the generalized version of the move. In fact, to ensure that excessive resources are not required at this step (which could happen if the number of Houdayer clusters is large), one can also sample a constant number of combinations of Houdayer clusters via a desired probability distribution $\nu_{\C}$ to limit the number of computations at this step.
    
    \item The mutation phase of the algorithm can also be modified depending on the problem at hand. For example, for a given configuration, one simple scheme is to flip the value of one of the variables in it with some probability.
    
    \item Finally, the selection process can be altered as desired. Here, the number of configurations $k$ retained at each iteration dictates the maximum size of the pool during the algorithm. If no selection process is done, the pool just keeps growing, slowing down the rest of the procedure.
\end{itemize}
Depending on these choices, one can achieve a polynomial running time. For example, if one limits the size of the pool to $k$ configurations, uses the simple mutation scheme which flips one variable with some probability, computes all pairs arising from the pool at each iteration, and moreover computes new pairs by only flipping one Houdayer cluster at a time during the crossover phase, the overall running time becomes $O(Nk^2|V|^2)$.
\begin{remark}
The way in which configurations are constructed in the genetic algorithm highlights an important aspect of Houdayer moves. Even though their main application is for sampling purposes, which explains why accessible pairs are chosen making a random choice via $\nu_{\C}$, one can leverage the move in a deterministic manner. That is, if the goal is to produce many configurations, there is no need to repeatedly choose only one of the many possible pairs at random each time. Instead, multiple or even all of them can be computed at once by considering the corresponding combinations of Houdayer clusters.
\end{remark}
In the next two sections, we show some applications of \cref{alg:genetic_algorithm} beyond its potential use as an approximate solver for optimization problems. The scheme can in fact be seen as a way to enhance and diversify an existing pool of configurations. Indeed, when starting with an initial pool of configurations with low value of $\mathcal{H}$, the procedure will continue to generate such configurations.

\subsubsection{Feasible Space Exploration in Constrained Optimization Problems}
Constrained optimization problems divide the space of configurations into feasible and non-feasible regions. The aim is to find a solution respecting all constraints (i.e., which belongs to the feasible region) while at the same time being optimal. For such problems, even getting in the feasible region can be a non-trivial task, so that finding a few feasible configurations is already enough. In an ideal scenario, one would like to be able to navigate the space of feasible configurations in order to carry out the minimization directly in this space without having to worry about the constraints anymore.

To see how Houdayer moves can be useful here, we first assume that the function $\mathcal{H}$ to minimize ``encodes" feasibility. By that, we mean that it acts as a penalty function, assigning infinite values to infeasible configurations and finite values to feasible ones. Formally, we have
\begin{equation*}
    \mathcal{H}(\boldsymbol{\sigma})  \begin{cases}
       < +\infty, &\quad\text{if $\boldsymbol{\sigma}$ is feasible}\\
       = + \infty, &\quad\text{otherwise}.
     \end{cases}
\end{equation*}
It is easy to see that if one has a pair of feasible configurations, a successful application of the Houdayer move can only lead to another pair of feasible configurations. In general, starting with an initial pool with feasible configurations, new ones can be reached using \cref{alg:genetic_algorithm}, making it a computationally attractive procedure to generate feasible configurations.

While this explanation helps building intuition, it breaks down in practice because of the form of $\mathcal{H}$ requiring coefficients $J_{ij}, h_i$ to be finite (see \cref{equation:problem_statement}). We thus relax our assumption and instead suppose that $\mathcal{H}$ encodes feasibility in the following way: there is a finite value $\lambda \in \mathbb{R}$ that we call \textit{feasibility threshold} such that for any configuration $\boldsymbol{\sigma}\in \mathcal{X}$,
\begin{equation*}
    \mathcal{H}(\boldsymbol{\sigma})  \begin{cases}
       < \lambda, &\quad\text{if $\boldsymbol{\sigma}$ is feasible}\\
       \geq \lambda, &\quad\text{otherwise}.
     \end{cases}
\end{equation*}
In many problems of interest, the feasibility threshold is known and can be easily tuned since constraints are encoded via penalty terms which can be scaled as desired \cite{ising_formulation_np_problems}. In order to retain the ability to find other feasible configurations using the Houdayer move, one needs the following result, which is illustrated in Fig.~\ref{fig:feasibility_threshold}.
\begin{lemma}\label{lemma:constrained_optimization_condition}
 Let $\mathcal{H}: \mathcal{X}\to\mathbb{R}$ be a function corresponding to a constrained minimization problem, with minimum value $\mathcal{H}_0$ and feasibility threshold $\lambda$. Then for any pair of configurations $\left(\boldsymbol{\sigma}^{(1)}, \boldsymbol{\sigma}^{(2)}\right)\in \mathcal{X}\times\mathcal{X}$ with
 \begin{equation}\label{equation:feasibility_threshold_condition}
     \mathcal{H}\left(\boldsymbol{\sigma}^{(1)}\right) +\mathcal{H}\left(\boldsymbol{\sigma}^{(2)}\right) < \lambda+\mathcal{H}_0,
 \end{equation}
 any configuration resulting from the application of a Houdayer move on $\left(\boldsymbol{\sigma}^{(1)}, \boldsymbol{\sigma}^{(2)}\right)$ is feasible.
\end{lemma}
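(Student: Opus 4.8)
The plan is to reduce the claim to a short arithmetic argument that combines the energy-preservation property of Houdayer moves with the fact that $\mathcal{H}_0$ is a global lower bound. First I would fix an arbitrary pair $\left(\boldsymbol{\sigma}^{(1)}, \boldsymbol{\sigma}^{(2)}\right)$ satisfying \cref{equation:feasibility_threshold_condition} and let $\left(\boldsymbol{\tau}^{(1)}, \boldsymbol{\tau}^{(2)}\right)$ be the pair produced by any generalized Houdayer move applied to it; I must show that each of $\boldsymbol{\tau}^{(1)}$ and $\boldsymbol{\tau}^{(2)}$ is feasible, which by the encoding assumption is equivalent to showing $\mathcal{H}\!\left(\boldsymbol{\tau}^{(j)}\right) < \lambda$ for $j \in \{1,2\}$ (if it were $\geq \lambda$, the configuration would be infeasible by hypothesis on $\mathcal{H}$).

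The key step is to invoke \cref{lemma:extended_houdayer_same_energy}, which gives $\mathcal{H}\!\left(\boldsymbol{\tau}^{(1)}\right) + \mathcal{H}\!\left(\boldsymbol{\tau}^{(2)}\right) = \mathcal{H}\!\left(\boldsymbol{\sigma}^{(1)}\right) + \mathcal{H}\!\left(\boldsymbol{\sigma}^{(2)}\right) < \lambda + \mathcal{H}_0$. Then, using that $\mathcal{H}_0 = \min_{\boldsymbol{\sigma}\in\mathcal{X}}\mathcal{H}(\boldsymbol{\sigma})$ is a lower bound, I would write, for each $j$,
\begin{equation*}
    \mathcal{H}\!\left(\boldsymbol{\tau}^{(j)}\right) = \left(\mathcal{H}\!\left(\boldsymbol{\tau}^{(1)}\right) + \mathcal{H}\!\left(\boldsymbol{\tau}^{(2)}\right)\right) - \mathcal{H}\!\left(\boldsymbol{\tau}^{(3-j)}\right) < \left(\lambda + \mathcal{H}_0\right) - \mathcal{H}_0 = \lambda,
\end{equation*}
since $\mathcal{H}\!\left(\boldsymbol{\tau}^{(3-j)}\right) \geq \mathcal{H}_0$. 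This yields $\mathcal{H}\!\left(\boldsymbol{\tau}^{(j)}\right) < \lambda$ for both $j=1$ and $j=2$, hence both output configurations are feasible; since the move and the choice of combination $T \subseteq \mathcal{C}$ were arbitrary, the conclusion holds for any configuration resulting from a Houdayer move.

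There is no real obstacle here: the argument is essentially a one-line consequence of \cref{lemma:extended_houdayer_same_energy} together with the definition of $\mathcal{H}_0$. The only point requiring a moment of care is the logical direction of the feasibility encoding — one must note that the hypothesis "$\boldsymbol{\sigma}$ infeasible $\Rightarrow \mathcal{H}(\boldsymbol{\sigma}) \geq \lambda$" is exactly the contrapositive of "$\mathcal{H}(\boldsymbol{\sigma}) < \lambda \Rightarrow \boldsymbol{\sigma}$ feasible," which is what we use. I would also remark (or relegate to the figure caption already referenced) that the bound is tight in the sense that if $\mathcal{H}\!\left(\boldsymbol{\sigma}^{(1)}\right)+\mathcal{H}\!\left(\boldsymbol{\sigma}^{(2)}\right) = \lambda + \mathcal{H}_0$ one cannot guarantee feasibility, which explains the strict inequality in \cref{equation:feasibility_threshold_condition}.
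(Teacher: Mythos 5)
Your proposal is correct and uses exactly the same two ingredients as the paper's proof — the energy-preservation property of \cref{lemma:extended_houdayer_same_energy} and the fact that $\mathcal{H}_0$ lower-bounds every configuration; the only difference is that you run the arithmetic directly on the output pair while the paper phrases the same inequality as a contrapositive, which is a cosmetic distinction. No gaps.
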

\begin{proof}
Since the total function value is preserved through a Houdayer move, it suffices to check that $\mathcal{H}\left(\boldsymbol{\sigma}^{(1)}\right) +\mathcal{H}\left(\boldsymbol{\sigma}^{(2)}\right) < \lambda + \mathcal{H}_0$ implies $\mathcal{H}\left(\boldsymbol{\sigma}^{(i)}\right) < \lambda$ for $i=1, 2$, where $\left(\boldsymbol{\sigma}^{(1)}, \boldsymbol{\sigma}^{(2)}\right) \in \mathcal{X}\times\mathcal{X}$. We show that the contrapositive holds, proving the result.

Suppose w.l.o.g. that $\mathcal{H}\left(\boldsymbol{\sigma}^{(1)}\right) = \lambda + \varepsilon \geq \lambda$ for some $\varepsilon \geq 0$. To have $\mathcal{H}\left(\boldsymbol{\sigma}^{(1)}\right) +\mathcal{H}\left(\boldsymbol{\sigma}^{(2)}\right) < \lambda + \mathcal{H}_0$, one needs $\mathcal{H}(\boldsymbol{\sigma}^{(2)}) < \mathcal{H}_0 - \varepsilon$, which is a contradiction since $\mathcal{H}_0$ is the minimum. Consequently we must have $\mathcal{H}\left(\boldsymbol{\sigma}^{(1)}\right) +\mathcal{H}\left(\boldsymbol{\sigma}^{(2)}\right) \geq \lambda + \mathcal{H}_0$, concluding the proof.
\end{proof}
\begin{figure}[!htbp]
    \centering
    \includegraphics[width=\columnwidth]{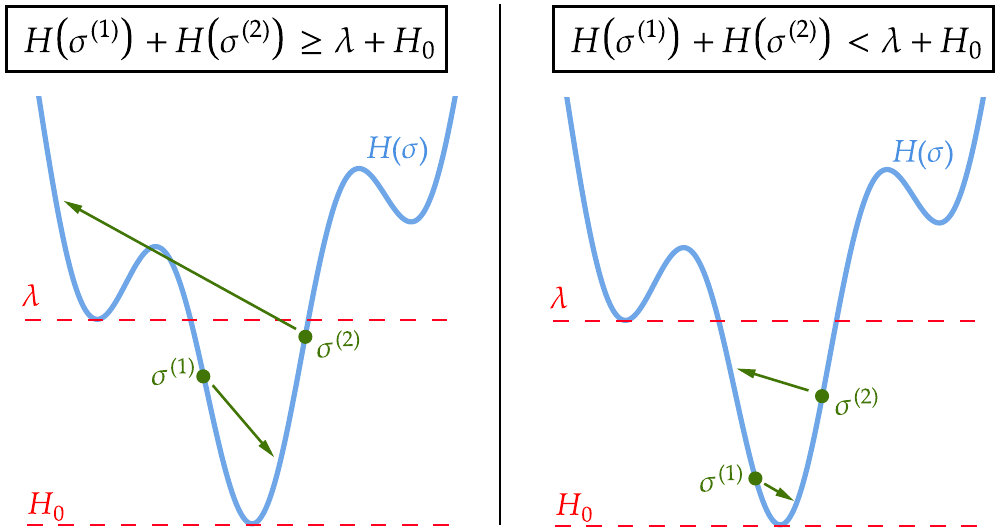}
    \caption{A visual representation of the necessary condition in constrained optimization problems to ensure that Houdayer moves yield feasible configurations, when feasible configurations are given (see \cref{lemma:constrained_optimization_condition}). Green dots represent configurations and green arrows show potential transitions given by the Houdayer move. The left plot depicts a scenario where the condition is not met and one of the resulting configuration becomes infeasible, while the right plot shows a scenario where the condition is met, ensuring that both configurations produced remain feasible. This illustrates in particular that the feasibility of configurations given as input (i.e., with energy lower than $\lambda$) does not ensure the feasibility of configurations produced.}
    \label{fig:feasibility_threshold}
\end{figure}
\begin{remark}
In practice since $\mathcal{H}_0$ is unknown, a simple trick is to modify the scale of penalty terms in $\mathcal{H}$ in order to artificially increase the feasibility threshold (see \cite{ising_formulation_np_problems}) to a large number so that $\lambda +\mathcal{H}_0 \approx \lambda$. The requirement from equation \cref{equation:feasibility_threshold_condition} then reads $\mathcal{H}\left(\boldsymbol{\sigma}^{(1)}\right) +\mathcal{H}\left(\boldsymbol{\sigma}^{(2)}\right) < \lambda$.
\end{remark}
To summarize, given a constrained optimization problem with known feasibility threshold, if one is able to build a pool of solutions such that any pair in the pool respects the condition in \cref{equation:feasibility_threshold_condition}, then it is ensured that all generated configurations are feasible.

\subsubsection{Ground-States Exploration in Degenerate Optimization Problems}\label{section:ground_state_generation}
Another use of \cref{alg:genetic_algorithm} is to generate optima of degenerate optimization problems using the following idea: when applying the Houdayer procedure, preserving the total function value typically means that the function value of one of the configuration decreases, while the other one increases by a similar amount. However when given two optimal configurations $\boldsymbol{\sigma}^{(1)}, \boldsymbol{\sigma}^{(2)} \in \mathcal{X}_0$, decreasing the value of any of them is impossible so that the two resulting configurations must preserve their individual function values and thus remain optimal.

As such, starting from a pool of optimal configurations (some subset of $\mathcal{X}_0$), one can potentially generate many more by simply using \cref{alg:genetic_algorithm} as follows. To ensure that the pool always contains as many optimal configurations as possible, the mutation and selection steps are removed. The pairing strategy and generation of new pairs during the crossover phase can still be adapted as highlighted in \cref{section:ga_description}, thus enabling to have a simple polynomial-time algorithm to generate optimal solutions.
\begin{remark}
It is important to note that this method does not necessarily allow to find all optimal configurations. Indeed, although Houdayer moves offer access to pairs of configurations at the same total function value, there is no certainty to visit all of them since the move is not ergodic.
\end{remark}
An illustration of the kind of results that can be achieved via this method is shown in \cref{subsection:maxcut_optima}.
        % First a Detour: Idea applied to Optimal configuration
        % Extend to generally low-energy solutions, application to constrained problems.

\section{Numerical Experiments} \label{section:numerical_results}
We now turn our attention to the numerical results obtained via the methods introduced in the previous section. Once again, our aim is to run the procedures in order to solve both sampling and optimization problems.

    \subsection{Improved Sampling for Markov Chain Monte-Carlo Methods}\label{subsection:improved_sampling_experiment}
    In order to understand how the use of the Houdayer moves can improve the sampling method mentioned in \cref{subsection:improved_sampling}, we use it to perform a derivative task, namely estimating mean values with respect to the Boltzmann-Gibbs distribution $\mu_{\beta}$ (see~\cref{equation:boltzmann_distribution}). Formally, given some function $f:\X \to \mathbb{R}$, we wish to evaluate the expected value
\begin{equation*}
    \mathbb{E}_{\mu_{\beta}}[f] = \sum_{\config \in \X} \mu_{\beta}(\config)\cdot f(\config),
\end{equation*}
which involves computing a costly sum over $2^{|V|}$ configurations. Instead, by using the Metropolis-Hastings algorithm, one can use the approximation (see \cite[Theorem 1.10.2]{markov_chains_norris})
\begin{equation}\label{equation:mcmc_approximation}
    \mathbb{E}_{\mu_{\beta}}[f] \approx \overline{f_n} \triangleq\frac{1}{n}\sum_{i=1}^n f(\config^{(i)}),
\end{equation}
where $n \ll 2^{|V|}$ is the number of iterations performed and $\config^{(1)}, \dots, \config^{(n)}$ are the configurations visited by the algorithm at each step. Hence, the quality of this approximation reflects how well the algorithm is able to sample from $\mu_{\beta}$. Note that such experiments have already been performed, showing the superiority of the original Houdayer move over the single-flip strategy~\cite{Houdayer_2001, cluster_algo_any_space_dimension}. The aim here is to compare the efficiency of the original version of the move against our generalization.

For this study, we additionally wrap the MH algorithm in a parallel tempering scheme~\cite{parallel_tempering} in order to improve the sampling results. This technique essentially performs the MH procedure at many different values of $\beta$ in parallel, while periodically allowing configurations to move across inverse temperatures. Coupled with Houdayer moves, the algorithm is able to explore energy landscapes both vertically (thanks to the multiple temperatures) and horizontally (thanks to the Houdayer move's ability to keep the total function value constant), making it particularly efficient~\cite{Houdayer_2001}. For each experiment in this section, one iteration corresponds to performing the single-flip strategy on each configuration at each temperature independently, followed by a Houdayer move on each pair of configurations at each temperature. Parallel tempering is often mentioned in the literature as being the ``state-of-the-art" when combined with Houdayer moves for solving general spin-glass problems~\cite{fujitsu_annealer}. Note that similar results can be obtained using simulated annealing instead of parallel tempering as the host algorithm, but the difference between the various Houdayer move types is less pronounced.

As a first step, we study ferromagnetic systems and examine the behavior of the average absolute magnetization, a quantity which is informative of whether a system has thermalized. Its true value is  given by $\mathbb{E}_{\mu_{\beta}}[m]$, where $m(\config) = \frac{\left|\sum_{i\in V}\sigma_i\right|}{|V|} $, but the evolution of $\overline{m_n}$ is studied instead via the parallel tempering procedure mentioned above. By looking at systems where the behavior of the average absolute magnetization is known, one can easily compare the different techniques we developed. In particular for ferromagnetic systems (i.e., with $J_{ij}>0$) at low temperature, neighboring variables want their values to be identical. This makes the average absolute magnetization tend to 1, so that we should equivalently observe $|\overline{m_n}-1|\to 0$ as $n$ grows, i.e., the relative error goes to 0. Multiple graph connectivities are examined, including the $100\times 100$ square lattice, the $70\times 70$ hexagonal lattice, and the $22\times 22\times 22$ cubic lattice, all with free boundary conditions and with bond values set to $J_{ij}=1$. The parallel tempering algorithm is run using twenty temperatures geometrically spaced from $\beta=0.1$ to $\beta=2$, but we only keep track of the evolution for $\beta=2$. We perform 1000 iterations for each variant of the move used, and the process is repeated 100 times starting from a uniformly random state in order to observe the average behavior.
\begin{figure*}[!htbp]
    \centering
    \includegraphics[width=\textwidth]{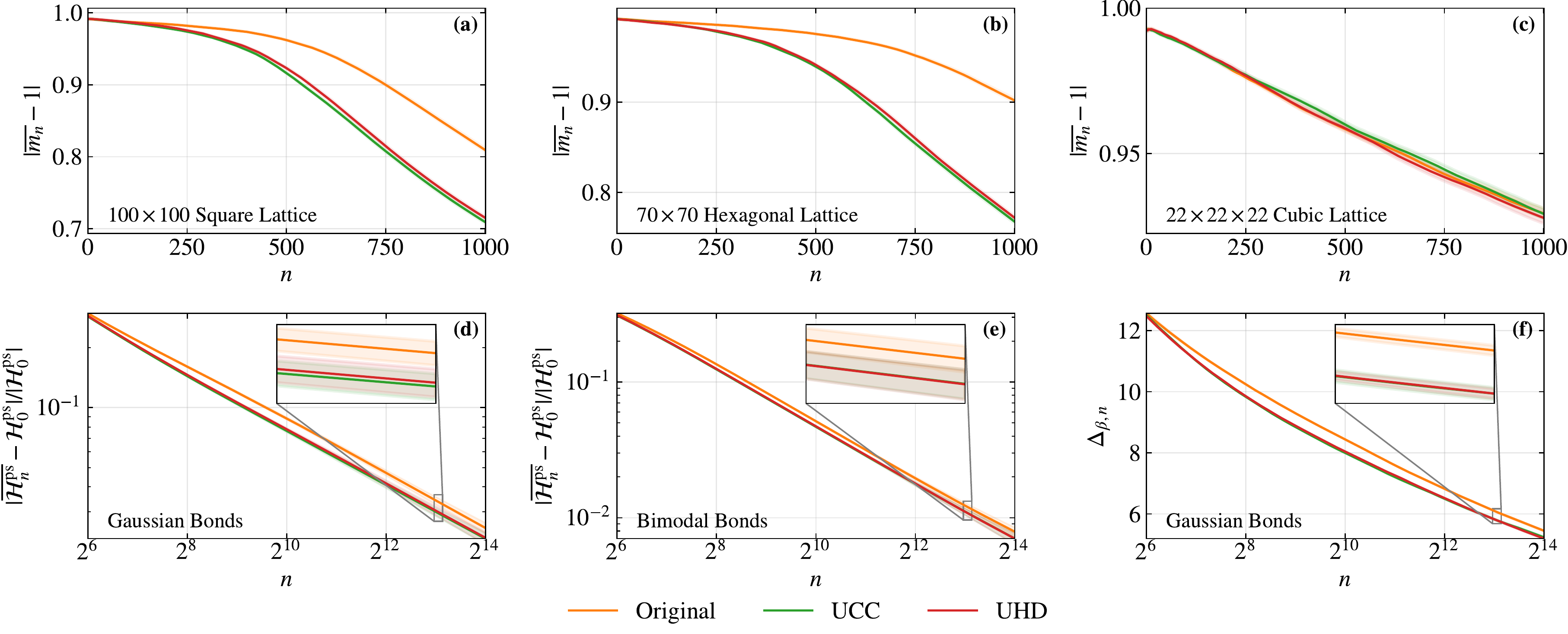}
    \caption{Results for the sampling experiments using Houdayer moves together with single-flips as Markov chain Monte-Carlo methods. (a)--(c) Evolution of the running average of the absolute magnetization through the relative errors $|\overline{m_n}-1|$ over $n=1000$ iterations when $\beta=2$. Each subfigure corresponds to a ferromagnetic system with a particular graph topology (square, hexagonal, or cubic lattices). The average over 100 random trajectories is shown in each case, together with the standard error of the mean in the shaded regions. (d)--(e) Evolution of the average energy per spin through the relative errors $\frac{|\overline{\mathcal{H}^{\mathrm{ps}}_n}-\mathcal{H}^{\mathrm{ps}}_0|}{|\mathcal{H}^{\mathrm{ps}}_0|}$ over $n=2^{14}$ iterations when $\beta=10$. Each subfigure corresponds to spin-glass systems with particular bond types (Gaussian or bimodal). (f) Evolution of the quantity $\Delta_{\beta, n}$ for spin-glass with Gaussian bonds. In plots (d)--(f), the graph topology is a lattice of size $100\times 100$ with periodic boundary conditions in both directions. In each case, the average over 10 random bonds assignments is shown, together with the standard error of the mean in the shaded regions.}
    \label{fig:magnetization_spin_glass}
\end{figure*}
The relative errors $|\overline{m_n}-1|$ are plotted in \cref{fig:magnetization_spin_glass}(a)--(c) as a function of the number of iterations $n$. One can see that there is a clear advantage in using moves resulting from our generalization for two-dimensional topologies, in particular for the hexagonal lattice. However for the cubic lattice, it is not possible to distinguish between the various methods, suggesting that the use of the original Houdayer move is already sufficient.

Next, we focus on the performance of Houdayer moves in the context of spin glasses, specifically for the Edwards-Anderson model~\cite[Chapter 2.1]{spin_glasses_and_information_processing}. In this setting, we have $h_i=0$ and the bonds form a $d$-dimensional lattice where the $J_{ij}$'s are distributed independently according to a probability distribution. Here, we use $d=2$ and consider two typical choices of distributions: the Gaussian case, where each $J_{ij}$ is distributed according to a centered Gaussian random variable with unit variance, and the bimodal case, where $J_{ij}$ takes values in $\{\pm 1\}$ with equal probability.

The first quantity we track is the average energy per spin. Its true value is given by $\mathbb{E}_{\mu_{\beta}}\left[\mathcal{H}^{\mathrm{ps}}\right]$, where $\mathcal{H}^{\mathrm{ps}}(\config)=\frac{\mathcal{H}(\config)}{|V|}$ (where the superscript ``ps" stands for ``per spin"), but the evolution of $\overline{\mathcal{H}^{\mathrm{ps}}_n}$ is studied instead. By choosing a large value of $\beta$, the energy observed should be close to that of ground states, for which precise values are known~\cite{ground_states_energy_ea_model, domain_wall_2d_lattice}. Specifically, when considering periodic boundary conditions in all directions in the two-dimensional lattice, the average energy per spin of ground-states can be approximated by $\mathcal{H}^{\mathrm{ps}}_0\approx-1.314787$ for Gaussian bonds, and $\mathcal{H}^{\mathrm{ps}}_0\approx-1.401922$ for the $\{\pm 1\}$ case ~\cite{domain_wall_2d_lattice}. In every case, the parallel tempering algorithm is run for $2^{14}$ iterations and using thirty temperatures, but we only keep track of the evolution for $\beta=10$. The plots in~\cref{fig:magnetization_spin_glass}(d)--(e) show the results for lattices of size $100 \times 100$ using each bond type, where the relative errors $\frac{|\overline{\mathcal{H}^{\mathrm{ps}}_n}-\mathcal{H}^{\mathrm{ps}}_0|}{|\mathcal{H}^{\mathrm{ps}}_0|}$ are reported. For both the Gaussian and bimodal bonds, the use of our generalization yields improved results compared to the original Houdayer move. However, there is no noticeable difference between the UCC and UHD Houdayer move. 
  
In the specific scenario where bonds follow a Gaussian distribution, one can rely on a quantity indicative of whether the system has thermalized~\cite{mcmc_spinglass_low_temp}. Defining the edge overlap between two configurations as $q^{\text{edge}}(\config^{(1)}, \config^{(2)}) = \frac{1}{|E|}\sum_{(i, j)\in E}\sigma_i^{(1)}\sigma_j^{(1)}\sigma_i^{(2)}\sigma_j^{(2)}$, we study the time-dependent behavior of
\begin{equation}\label{equation:thermalization_equation}
    \Delta_{\beta, n}(\mathcal{H}) \triangleq \frac{\overline{\mathcal{H}_n}}{|V|}-\beta\frac{|E|}{|V|}\left(1-\overline{q^{\text{edge}}_n}\right),
\end{equation}
which, for a fixed function $\mathcal{H}$ and an inverse temperature $\beta$, equates the internal energy per spin to the internal energy from the edge overlap~\cite{mcmc_spinglass_low_temp}. Additionally, since edge weights are themselves random (and thus $\mathcal{H}$ is random), we further take the expected value over this randomness. Hence, denoting by $\xi$ the probability distribution which produces functions $\mathcal{H}$ with the prescribed Gaussian edge weights, the quantity of interest is $\Delta_{\beta, n} \triangleq \mathbb{E}_{\xi}[\Delta_{\beta, n}(\mathcal{H})]$. A $n$ grows, one should observe that $\Delta_{\beta, n} \to 0$, indicating that the system is thermalized~\cite{correlations_pt_spinglass}. In our experiments, $\Delta_{\beta, n}$ is approximated by estimating $\Delta_{\beta, n}(\mathcal{H})$ on multiple instances with Gaussian edge weights and averaging the results obtained. Note that a relationship between variables similar to~\cref{equation:thermalization_equation} cannot be established when bonds follow the $\{\pm 1\}$ model~\cite{equilibrium_bimodal}. The results can be visualized in~\cref{fig:magnetization_spin_glass}(f), where once again our generalization accelerates the thermalization process.

While the gain obtained by using the generalized Houdayer move might only seem to be marginal, it can vary depending on certain parameters. To gain deeper understanding of how the difference between the original version of the move and our generalization compare, we perform two additional experiments. In the first one, the size of the lattice is fixed to $L\times L$ with $L=100$ and the inverse temperature takes values $\beta\in\{2, 4, 8, 16, 32\}$. At a given temperature, the system is simulated using the original and UCC Houdayer move. For each variant of the move, the number of iterations $n^{\mathrm{variant}}$ needed so that the relative error of the average energy per spin falls below $0.1$ is recorded. We then plot the quantity $n^{\mathrm{Original}}/n^{\mathrm{UCC}}$ as a function of $\beta$ to find the speedup obtained by using the UCC variant of the Houdayer move over the original one. In the second experiment, the same procedure is conducted but the inverse temperature is fixed to $\beta=32$ and the lattice size takes values $L\times L$ with $L\in\{25, 50, 100, 200\}$. The quantity $n^{\mathrm{Original}}/n^{\mathrm{UCC}}$ is then computed as a function of $|V|$. 
\begin{figure*}[!htbp]
    \includegraphics[width=\textwidth]{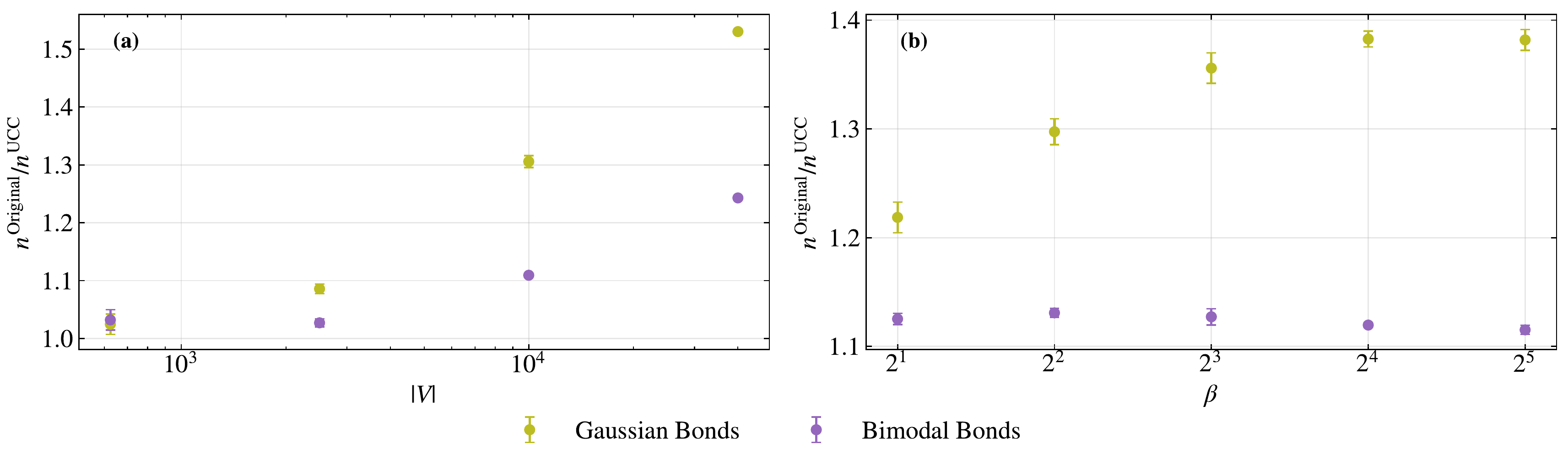}
    \caption[Ratio Performance Houdayer VS UCC]{(a) Error plot showing the evolution of $n^{\mathrm{Original}}/n^{\mathrm{UCC}}$ as a function of $|V|$ when $\beta=32$ (where $n^{\mathrm{variant}}$ is the number of iterations needed for a particular Houdayer variant to reach a relative error of 0.1 for the average energy per spin). Results are shown for Gaussian and bimodal bonds. The average over 10 random bonds assignments is shown in each case, together with the standard error of the mean. (b) Error plot showing the evolution of $n^{\mathrm{Original}}/n^{\mathrm{UCC}}$ as a function of $\beta$ when the lattice size is $100\times 100$ (where $n^{\mathrm{variant}}$ is the number of iterations needed for a particular Houdayer variant to reach a relative error of 0.1 for the average energy per spin). Results are shown for Gaussian and bimodal bonds. In each case, the average over 10 random bonds assignments is shown, together with the standard error of the mean.}\label{fig:spin_glass_improvement}
  \end{figure*}
\Cref{fig:spin_glass_improvement} suggest that the improvement brought by the generalized Houdayer move increases as $|V|$ grows for the two types of bonds considered. As a function of $\beta$, the speedup eventually reaches a plateau. This is expected since when $\beta \to \infty$, the sampling process is essentially transitioning from one ground state to the other, hence remaining at the same energy. At this point, the Houdayer moves are doing most of the work within the MH algorithm, so that the quantity $n^{\mathrm{Original}}/n^{\mathrm{UCC}}$ reflects the best ratio attainable.

Overall, the use of more general variants of the Houdayer move can bring a considerable advantage when sampling from the Boltzmann-Gibbs distribution of both ferromagnetic and spin-glass systems. For small-scale systems, the observed gains can seem modest, but they get increasingly better as the number of variables grows. Comparing the UCC and UHD move suggests that there is almost no difference between the two, indicating that the additional computing power consumed for the UHD move is not necessary. Since the performance of Houdayer moves is sensitive to the graph topology underlying the system, it would be interesting to find a family of problem instances where the use of the UHD Houdayer move brings a larger advantage over the original and UCC versions.
\iffalse
\Cref{fig:delta_evolution} depicts the evolution of $\Delta_{\beta, n}$ as a function of $\log_2(n)$ for $\beta=2$ for two different graphs: a $32\times 32$ square lattice and a $20\times 20$ hexagonal lattice. To obtain the results, the MH algorithm is run for $2^{14}$ iterations for each variant of the Houdayer move, where each iteration consists in $|V|$ single-flips followed by an application of the chosen Houdayer move. This procedure is repeated for 100 different different bond values assignments, and the average behavior is observed together with the standard error of the mean. The results show that the addition of Houdayer moves enables a faster convergence, but the performance between the different versions of the move are similar, with a negligible improvement when using the UCC or UHD variants of the move.

Given the more powerful nature of the generalized Houdayer move, it would be interesting to find a family of problem instances where the use of the UCC and UHD Houdayer move brings a larger advantage over the original version.
\fi
    
    \subsection{Generation of Maximum Cuts} \label{subsection:maxcut_optima}
    In this set of experiments, we explore the performance of \cref{alg:genetic_algorithm} for generating optimal solutions (see \cref{section:ground_state_generation}) in the context of the \textit{Maximum Cut} problem. As a reminder, the aim of the Maximum Cut problem is, given a graph $G=(V, E)$, to find a partition of the nodes into two sets $A$ and $B$ which maximizes the size of the set $\{(i, j)\in E : i\in A \text{ and } j\in B\}$. The corresponding function to minimize can be written as~\cite{max_cut_ising}:
\begin{equation}\label{equation:maxcut_ising_formulation}
    \calH(\config) = \sum_{(i, j)\in E}\sigma_i\sigma_j, 
\end{equation}
where $\config \in \{-1, 1\}^{|V|}$.
Since the goal here is to generate new optimal solutions when given a few of them, we need to consider instances which are highly degenerate. Thus, graphs which present many regularities such as lattices, provide good candidates to run our experimentation. Here, we focus on triangular lattices of sizes $6\times 5$ and $6\times7$. Similar experiments have been completed in~\cite{icm_generating_more_optimum} in order to show the performance that can be obtained by using the Houdayer move. Here, we provide a comparison between different variants of the move, namely the original and the UCC version. The UHD Houdayer move is not used in this case since the genetic algorithm starts with optimal configurations. Indeed, the goal is not about improving the value of the function or the diversity of the pool of configurations given as input, but rather to generate as many new optimal configurations as possible. Hence, it would not make sense to bias our choice towards certain pairs of configurations when performing a Houdayer move, hinting at the fact that one should use the UCC Houdayer move, which selects pairs of configurations in a uniform way (see~\cref{equation:extended_houdayer_distribution}).

For each graph instance considered, the set of all maximum cuts $\X_0$ are computed via a brute-force approach since $|V|$ is not too large. We then give as input to the algorithm subsets of $\X_0$ of particular sizes to study how many more optimal configurations can be computed. The subsets sizes take the form $f\cdot |\X_0|$, where $f \in\{0, 0.05, \dots, 0.95, 1\}$. For each value of $f$, we report the average behavior over 10 uniformly random subsets of the corresponding size. The version of \cref{alg:genetic_algorithm} presented in \cref{section:ground_state_generation} is used, meaning that the mutation and selection steps are removed. Moreover, we restrict ourselves to a single iteration of the algorithm, where the pair-construction step is done by sampling a constant number $K$ of pairs from the pool of all configurations available. We make this number vary as $K\in\{10, 50, 250\}$ in order to analyze how this affect the performance of the algorithm. Finally, the crossover step is done by repeatedly applying the desired flavor of the Houdayer move 100 times on each constructed pair of configurations.

\begin{remark}
Since the function to minimize in \cref{equation:maxcut_ising_formulation} does not contain any linear term, Remark~\ref{remark:houdayer_quadratic_only} applies and one can hence also select clusters with any overlap value in the local overlap. In our experiments, we analyze the difference this can make by repeating the experiments and taking into account these extra clusters (see lines with the $^{++}$ indicated in \cref{fig:maxcut_trilattice}).
\end{remark}
\begin{figure*}[!htbp]
    \includegraphics[width=\textwidth]{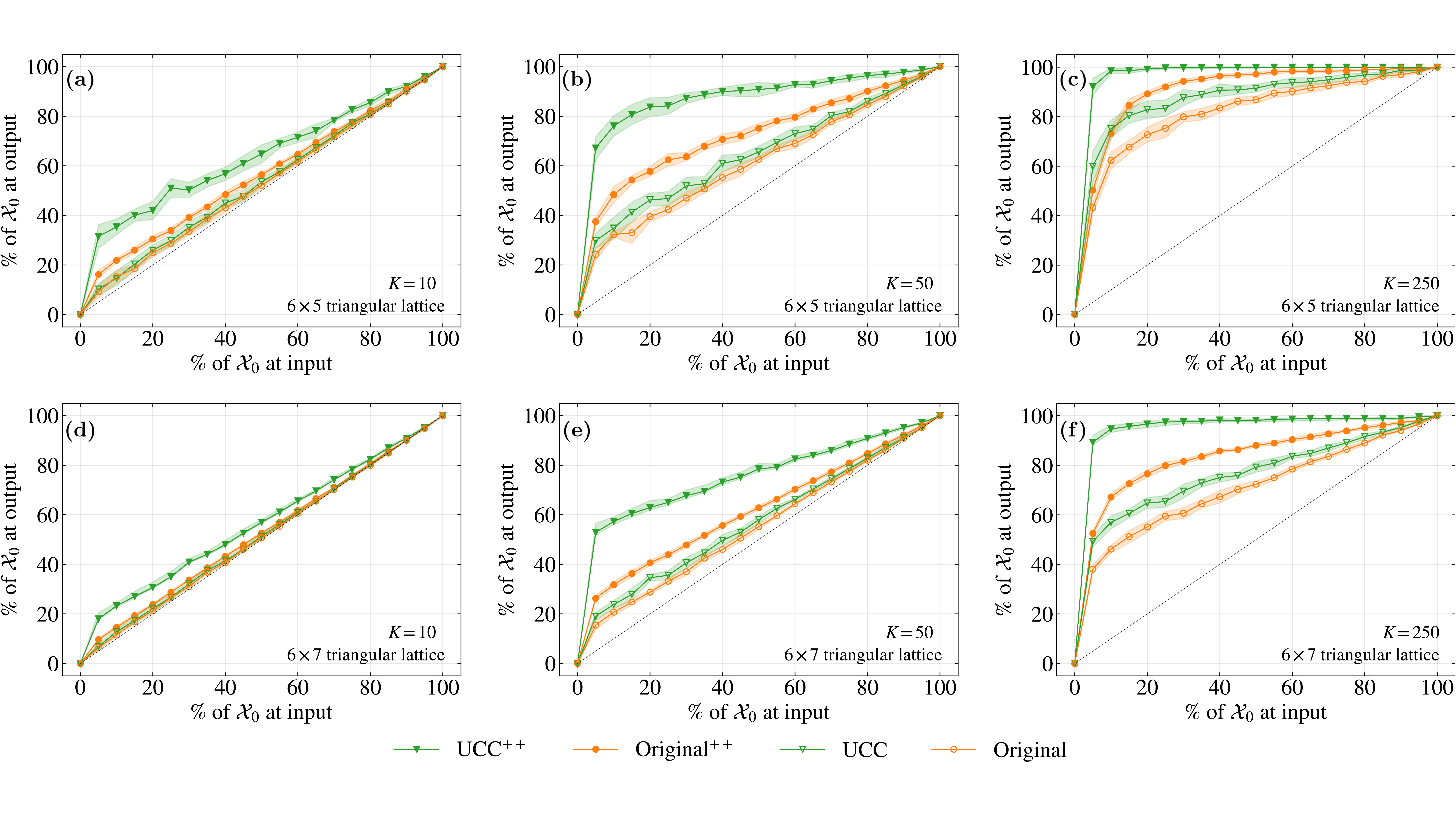}
    \caption[Maximum Cut on Triangular Lattices]{Results for the Maximum Cut example in Section~\ref{subsection:maxcut_optima}, comparing the fraction of all optimal configurations $\X_0$ that can be generated via \cref{alg:genetic_algorithm} given a certain fraction of $\X_0$ as input. The results for a 6 by 5 triangular lattice are shown in (a)--(c), while results for a 6 by 7 triangular lattice are depicted in (d)--(f). Each column contains results for different values of $K$, the number of pairs created from the pool of configurations available. For each plot, both the original and UCC versions of the Houdayer move are compared, together with those making use of \cref{remark:houdayer_quadratic_only} (signaled by a $^{++}$). Each point represents the result obtained for a specific instance averaged over 10 independent runs, and the shaded region around the points show the standard error of the mean.}
    \label{fig:maxcut_trilattice}
  \end{figure*}
\iffalse
\begin{figure*}[ht]
    \begin{minipage}{\textwidth}
    \includegraphics[width=\textwidth]{images/triangular_lattice_6_5_3plots_none2.pdf}
    \vspace{0.01\textwidth}
    \end{minipage}
    \begin{minipage}{\textwidth}
    \includegraphics[width=\textwidth]{images/triangular_lattice_6_7_3plots_none2.pdf}
    \end{minipage}
    \caption[Maximum Cut on Triangular Lattices]{Results for the Maximum Cut example in \cref{subsection:maxcut_optima}. The plots show the fraction of all optimal configurations $\X_0$ that can be generated given a certain fraction of them, using \cref{alg:genetic_algorithm}. Top row shows results for a 6 by 5 triangular lattice and bottom row depicts the results for a 6 by 7 triangular lattice. Each column contains results for different values of $K$, the number of pairs created from the pool of configurations available. In each plot, both the original and UCC versions of the Houdayer move are compared, together with those making use of \cref{remark:houdayer_quadratic_only} (signaled by a $^{++}$). Each point represents the result obtained for a specific instance, averaged over 10 independent runs. The shaded region around points show the standard deviations.}
    \label{fig:maxcut_trilattice}
  \end{figure*}
\fi
 At first glance, the results in~\cref{fig:maxcut_trilattice} show a notable improvement when taking into account the remark above. In every case, considering extra clusters in the local overlap can only offer more possibilities to find optimal configurations that have not been explored yet. It is also evident that the UCC variant of the Houdayer move outperforms the original, which comes to no surprise since the aim was to allow access to a more diverse set of pairs of configurations at the same total function value.

When strongly restricting the number of pairs constructed from the pool of configurations during the crossover phase ($K=10$), the algorithm can only generate a few more of them but it is not enough to find most (see~\cref{fig:maxcut_trilattice}(a),(d)). As $K$ increases, there are more and more opportunities to construct pairs which allow a successful use the Houdayer move, which is exhibited in~\cref{fig:maxcut_trilattice}(b),(e). For $K=250$, the plots reflect that only a small percentage of $\X_0$ is required in order to recover most of it when using the UCC Houdayer move with the comments from \cref{remark:houdayer_quadratic_only}. More precisely, only having access to 5\% of $\X_0$ allows to retrieve around 90\% of it, while the knowledge of 10\% of $\X_0$ is sufficient to retrieve almost all optimal configurations. In comparison, the original Houdayer move recovers only 75\% of all optimal configurations or less when given 10\% of them (see~\cref{fig:maxcut_trilattice}(c),(f)).
    
    \subsection{Generation of Vertex Covers}\label{subsection:minimum_vertex_cover}
    The \textit{Minimum Vertex Cover} (MVC) problem is a well-known constrained optimization problem which can be tackled by \cref{alg:genetic_algorithm}. Given a graph $G=(V, E)$, the goal is to find a set of vertices $S \subseteq V$ of minimum cardinality with the constraint that it ``covers'' all edges in $G$. For a given $S$, an edge $(i, j) \in E$ is said to be covered if at least one of its constituent vertices is part of $S$.  The corresponding function to minimize can be derived  as~\cite[Section 4.3]{ising_formulation_np_problems}:
\begin{equation*}
    \mathcal{H}(\boldsymbol{x}) = \sum_{i \in V}x_i + P\sum_{(i, j) \in E}\left(1-x_i\right)\left(1-x_j\right),
\end{equation*}
where $\boldsymbol{x}\in \{0, 1\}^{|V|}$ and $P>1$ is a parameter controlling the strength of penalization of configurations which are not covers. This can be conveniently converted to the form of \cref{equation:problem_statement} via \cref{remark:ising_one_to_one_mapping}.

For these experiments, we restrict our attention to the MVC problem on random $d$-regular graphs, which is known to be NP-complete~\cite[Theorem 1]{vertex_cover_np_regular_graphs}. Furthermore, this family of graphs is interesting because their percolation threshold $p_c$ can be analytically determined and has value $p_c=\frac{1}{d-1}$~\cite[Section 3]{percolation_threshold_expanders}. Focusing on the case $d=3$ ensures that $p_c \geq \frac12$, giving room for Houdayer moves to perform successfully as explained in \cref{section:houdayer_limitations}.

Starting from a set of configurations $S_{\mathrm{init}}$ given by the simulated annealing algorithm with single-flip proposals, we use \cref{alg:genetic_algorithm} to build a set $S_{\mathrm{final}}$ of better configurations. Two metrics are analyzed: the improvement over the initial population in terms of the value of $\mathcal{H}$ attained, and the improvement in terms of the diversity of the configurations generated by the genetic algorithm. Note that similar types of experiments have been performed in~\cite[Section II.A]{icm_generating_more_optimum} using the original form of the Houdayer move. We show here how other versions of the move affect the performance of the algorithm.

The improvement in terms of the function value is captured in the following way. We use $\mathcal{H}_0^{\mathrm{init}}$ and $\mathcal{H}_0^{\mathrm{final}}$ to denote the lowest function values attained by the set $S_{\mathrm{final}}$ and $S_{\mathrm{init}}$, respectively. Given that $\mathcal{H}(\boldsymbol{x}) \geq 0$ and since the genetic algorithm can only output a solution with value at least as good as $\mathcal{H}_0^{\mathrm{init}}$, the quantity $1-(\mathcal{H}_0^{\mathrm{final}}/\mathcal{H}_0^{\mathrm{init}}) = (\mathcal{H}_0^{\mathrm{init}} -\mathcal{H}_0^{\mathrm{final}})/\mathcal{H}_0^{\mathrm{init}}$ lies in the interval $[0,1]$. This provides an easy way to measure the improvement achieved by the genetic algorithm: for example, the value $(\mathcal{H}_0^{\mathrm{init}} -\mathcal{H}_0^{\mathrm{final}})/\mathcal{H}_0^{\mathrm{init}} = 0.5$ means that the genetic algorithm found a solution with function-value equal to half of the best one given as input, thus yielding a 50\% improvement. We emphasize that although simulated annealing was used here to compute $S_{\mathrm{init}}$, the initial set can in general be chosen via any means.
\begin{figure*}[!htbp]
    \centering
    \includegraphics[width=\textwidth]{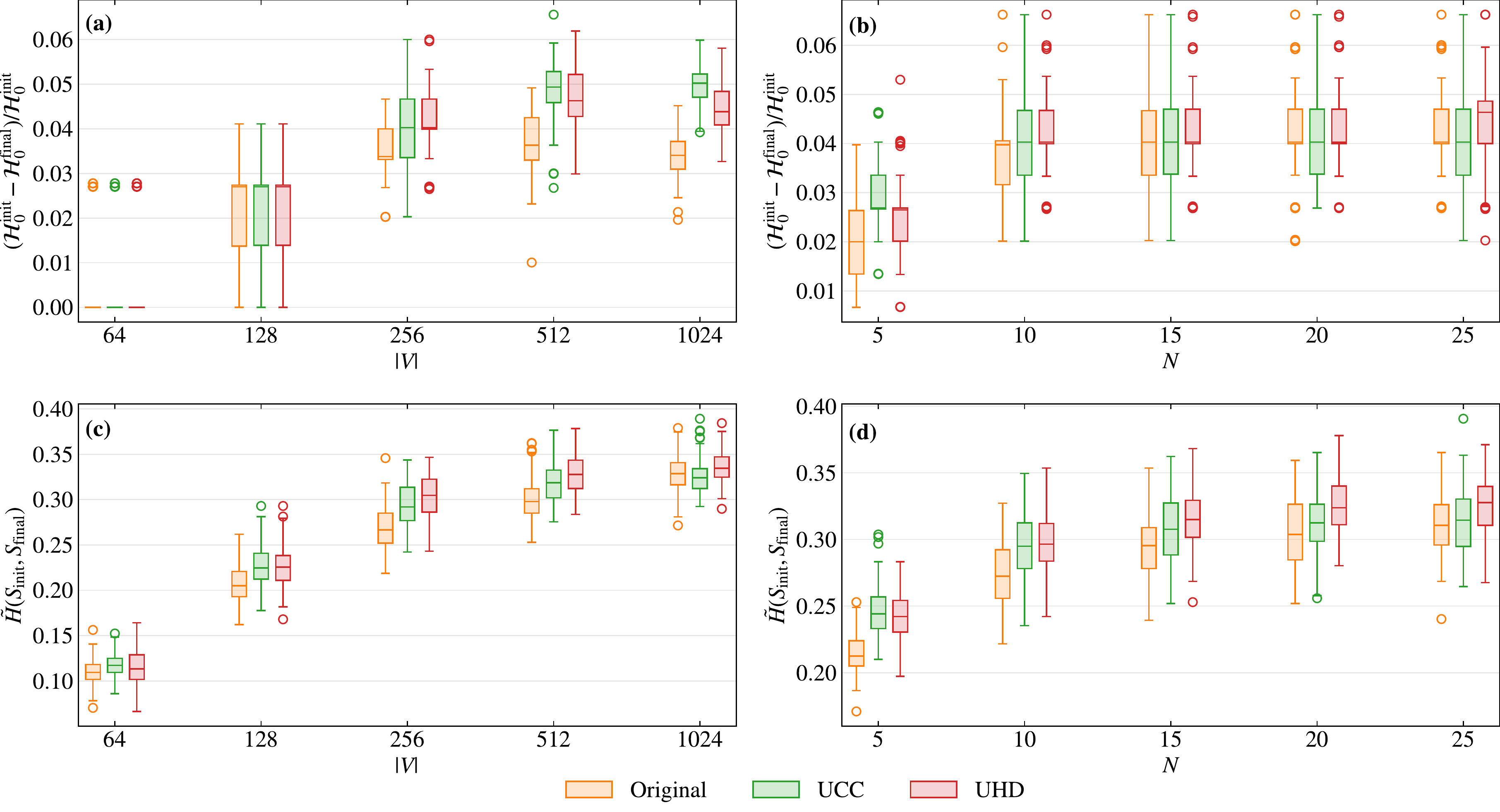}
    \caption{Boxplots of the results for the example in \cref{subsection:minimum_vertex_cover} comparing the performance of the genetic algorithm when using three different versions of the Houdayer move: the original one, the UCC one and the UHD one. The plots (a)--(b) show the values obtained for the ratio $(\mathcal{H}_0^{\mathrm{init}} -\mathcal{H}_0^{\mathrm{final}})/\mathcal{H}_0^{\mathrm{init}}$, while the plots (c)--(d) focus on the modified Hausdorff distance $\Tilde{H}(S_{\mathrm{init}}, S_{\mathrm{final}})$ between the set of configurations $S_{\mathrm{init}}$ given as input to the genetic algorithm and the resulting set $S_{\mathrm{final}}$. The plots in (a) and (c) study these measures as a function of the system size $|V|$ with fixed number of iterations $N=10$, while the plots in (b) and (d) do so as a function of the number of iterations $N$ performed by the genetic algorithm with fixed system size $|V|=256$.}
    \label{fig:min_vertex_cover_results}
\end{figure*}

To quantify the diversity of the solutions computed, a measure of how far the initial pool is from the final pool is used, through a notion of distance between sets inspired by the \textit{Hausdorff distance}~\cite{hausdorff_distance}. Originally defined as
\begin{equation}\label{equation:hausdorff_distance}
    H(S, T) = \max\left(\max_{\boldsymbol{\tau}\in T} d_H(\boldsymbol{\tau}, S), \max_{\config\in S} d_H(\config, T)\right),
\end{equation}
where $S, T \subseteq \mathcal{X}$, the Hausdorff distance has the inconvenience to be sensitive to outliers due to the $\max(\cdot)$ operator. To add robustness against outliers, we use one of its modified versions proposed in~\cite{modified_hausdorff_distance} and replace $\max(\cdot)$ operations by taking medians instead. At the same time, a normalization by $|V|$ is also included to ease the comparison of such distances between systems of different sizes. Overall, this yields the modified quantity
\begin{align}\label{equation:hausdorff_distance}
    \Tilde{H}(S, T) = \frac{1}{2|V|}\Bigl(\mathrm{med}&(\{d_H(\boldsymbol{\tau}, S): \boldsymbol{\tau}\in T\})\nonumber\\
    &+\mathrm{med}(\{d_H(\config, T): \config\in S\})\Bigr),
\end{align}
which lies in the interval $[0, 1]$.

We study the two metrics presented above as a function of the system size $|V|$ and the number of iterations $N$ performed by the genetic algorithm. First, we consider systems of increasing sizes with $|V|\in\{64, 128, 256, 512,1024\}$. For each size considered, 100 random instances are generated and the genetic algorithm is applied to each of them while keeping track of the metrics of interest. A version of the genetic algorithm which runs for $N=10$ iteration is used, where at each step 500 random pairs are chosen uniformly from the pool of configurations. The crossover step is done using different strategies for picking Houdayer clusters, namely via the original, UCC, and UHD Houdayer moves. The selected type of move is applied 10 times on each created pair to ensure that many pairs are generated whenever a Houdayer move can be successfully performed. The mutation operator flips each variable with probability $1/|V|$, and the selection process regulates the pool size by retaining 250 configurations. Similar experiments are then repeated, but this time the number of iterations is increased as $N\in\{5, 10, 15, 20, 25\}$ while the system size is fixed to $|V|=256$. The results are gathered and depicted in \cref{fig:min_vertex_cover_results}.
A quick glance at~\cref{fig:min_vertex_cover_results}(a),(c) reveals that the use of the UCC and UHD Houdayer moves yields better results with respect to the two measures considered. Indeed, since those moves are specifically designed to consider all combinations of Houdayer clusters, they have the ability to find a much richer pool of configurations both in terms of the function $\mathcal{H}$ and the diversity measure $\Tilde{H}$. There are however still nuances, as shown in the top-left plot which suggests that the three methods perform similarly for systems of small sizes ($|V|<256$), and in fact poorly when $|V|=64$. As the size of the system grows, the UCC Houdayer move gives the best results in terms of the function value, followed by the UHD Houdayer move. The bottom-left plot comparing the modified Hausdorff distance shows a distinct pattern: the UHD variant performs the best, followed by the UCC one, itself followed by the original move. Intuitively, by offering access to many more pairs via the UCC move, this allows the Hamming distance between configurations to vary extensively, resulting in a larger value of $\Tilde{H}$. By further tailoring the move to make it transition in a uniform way with respect to the Hamming distance, this further improves the diversity of configurations found.

\Cref{fig:min_vertex_cover_results}(b),(d) depicts how the number of iterations of the genetic algorithm affects performance. The results confirm what the intuition may suggest: as the number of iterations grow, more mixing happens between configurations, which reduces the advantage offered by the UCC and UHD Houdayer moves. In the top plot, all variants give similar results after 15 iterations or more, even though the UHD variant has a slight advantage. For the modified Hausdorff distance, the same comments as previous paragraph can be made.

To sum up, when the goal is to generate as many distinct configurations as possible, one has to look into all possible combinations of Houdayer clusters. Specifically, \cref{fig:min_vertex_cover_results} indicates that better results in terms of the function can be attained via the UCC Houdayer move while a larger diversity of configurations is obtained when using the UHD Houdayer move. Using such techniques enables to improve on the best function value attained by the initial pool by $3$-$6\%$. In terms of the diversity of the configurations generated, our modified Hausdorff distance shows that roughly, the configurations at the input differ with those at the output on $30$-$40\%$ of all variables.

    %\subsection{Image Denoising using the Ising Model} \label{subsection:image_denoising}
    %\input{numerical_experiments/image_denoising}

\section{Conclusion and Outlook} \label{section:conclusion}
    Solving general quadratic unconstrained binary optimization problems, or equivalently Ising spin-glass instances, is a hard computational task. In this work, we explored the use of the Houdayer cluster algorithm and its ability to improve both sampling and optimization. This examination led to an extension of the move (cf.~\cref{section:generalized_houdayer}), giving more possible ways to find pairs of configurations at the same energy. This also enabled us to understand the main purpose and strength of the Houdayer move: by allowing to mix configurations together and consequently making large rearrangements in the variables' values, it offers a unique strategy to explore the space of all configurations. In particular, the new variants can be used both to improve sampling in Markov chain Monte-Carlo procedures (where one can adapt it to yield a better exploration of the space, cf.~\cref{subsection:improved_sampling}), but also as a building block in a genetic algorithm which can enhance the quality of a given pool of configurations (cf.~\cref{subsection:genetic_algorithm}). The experiments suggest that the generalization leads to improvements over the original move in a variety of examples (cf.~\cref{section:numerical_results}), even though the difference between the two might not always be significantly large.

Multiple directions can be further explored to extend the results presented in this work. One of the most useful additions would be to devise a method to make Houdayer moves successful on a larger family of graphs. As mentioned in~\cref{section:houdayer_limitations}, the move is likely to fail on graphs with percolation threshold lower than 0.5. Could one come up with a method to modify the graph structure in order to retain an equivalent problem statement while reducing the percolation threshold of the graph? Alternatively, if the preservation of energy is not a strong requirement, one could make the creation of clusters similar to the Swendsen-Wang-Wolff algorithm~\cite{cluster_move_diluted_spins, percolation_signature_spin_glass}. In that case, more clusters can be created and their sizes could be further tuned by allowing the use of arbitrary probabilities when adding sites to clusters~\cite{general_cluster_updating_method}.

Another direction to examine would be to design a modification of the Houdayer move for the Potts model~\cite{Potts1952}. In the $q$-Potts model, variables can take one of $q\geq 2$ values and the function of interest is of the form
    \begin{equation*}
        \mathcal{H}(\config) = -\sum_{(i, j)\in E} J_{ij} \delta_{\sigma_i\sigma_j} - \sum_{i\in V} h_i \sigma_i,
    \end{equation*}
    where $\boldsymbol{\sigma}\in \{0, \dots, q-1\}^{|V|}$ and $J_{ij}, h_i\in \mathbb{R}$. In this setting, a procedure similar to the Houdayer move can be designed by maintaining $q$ configurations at the same time instead of just a pair. At each step, the move would compute the local overlap between all $q$ configurations, thus determining the sites where all configurations disagree. Just as in the generalized procedure presented, these sites would define clusters, from which a combination of them can be chosen. One can then take the values of the variables in the $q$ configurations at these sites and exchange them in an arbitrary manner between configurations while preserving the energy. Assuming that the $q$ configurations are given uniformly at random, the probability that all $q$ configurations disagree at a particular site is equal to $\frac{q!}{q^q}$. Much like the case where $q=2$, the process of cluster formation can then be viewed as a site percolation process with occupation probability $p(q)=\frac{q!}{q^q}$. In this case, the formation of a spanning cluster would result in a failure if the site percolation threshold $p_c$ of the graph studied is such that $p_c < p(q)$. Since $p(q)$ decreases exponentially fast with respect to $q$, for $q\geq 2$ it becomes highly unlikely for the move to fail due to a low graph percolation threshold. However, the price to pay comes from the fact that the exponential decrease of $p(q)$ corresponds to an exponential decrease in the number of sites where all configurations disagree. Ultimately, this results in the nonexistence of any Houdayer cluster unless the number of nodes grows as $\Omega(1/p(q))$. This generalization is thus interesting mainly for low values of $q$ such as $q=3$, where the percolation threshold $p(q)$ takes lower values which allow the move to work for more graph topologies, while remaining large enough to ensure the existence of Houdayer clusters in the local overlap. Interesting graph problems like finding a $q$-coloring~\cite[Section 8.7]{algorithms_design} could be tackled via this approach and compared to conventional methods which rely solely on the Metropolis-Hastings algorithm with local updates.
    
    The purpose of the Houdayer move is to mix pairs of configurations together while preserving their total function value. A natural question to ask is whether one can mix more than just two configurations at the same time. We touched upon this in the previous paragraph when considering the general $q$-Potts model, but these ideas could already be used in the usual Ising model.
    
    Further experiments could be conducted to understand how well our generalization performs against other methods in various settings. There could be for example interesting applications in the field of restricted Boltzmann machines~\cite{rbm_hinton} where parallel tempering has been shown to improve their training~\cite{pt_efficient_rbm, tempered_mcmc_rbm, pt_rbm_convergence}. By additionally applying Houdayer moves, the training could be further improved, in particular for convolutional restricted Boltzmann machines~\cite{convolutional_rbm}.
    
    Finally, a more theoretical view on the Houdayer move could be taken in the context of Markov chain Monte-Carlo methods in order to understand the mixing time of the constructed Markov chains. As highlighted in \cref{section:expected_number_of_flips}, a direct comparison with the simple single-flip dynamics might not be possible, but the study of transition matrices could lead to a more profound understanding of the Houdayer move's apparent ability to reduce the mixing time.
    
Overall, we believe that a better understanding of the mathematical underpinnings of Houdayer moves will open the doors to new improvements in the simulation of spin systems, and more generally in the domains of sampling and optimization. Our generalization and its applications are only a small step towards a full comprehension of what this algorithm can provide. Being at the intersection of multiple fields like percolation theory or Markov chain theory, there is no doubt that many angles can be taken to make advances towards a rigorous analysis. Other paths could lead to progress, in particular by tackling the current limitations highlighted throughout the document. It is our hope that this work will serve as a guide for such further developments and as a source of inspiration towards a deeper comprehension of spin-glass systems.
    
\section*{Acknowledgments}
The authors would like to extend their gratitude to Amedeo Roberto Esposito and Ulrike Bornheimer for thoroughly reviewing the manuscript. We also thank Nicolas Macris and Alistair Sinclair for helpful discussions.

\section*{Appendix}

\appendix
 
\section{Percolation Theory and Houdayer Moves}\label{appendix:percolation_theory}

This section is dedicated to a brief introduction to percolation theory and its relationship to Houdayer moves, with the aim of providing the reader with the necessary notation and terms used in this field. For a more in-depth introduction, we invite the reader to look at~\cite{Beffara2006, intro_percolation_theory}.

The model considered is the following: given a graph $G=(V, E)$ and $p\in [0, 1]$,  we look at the probabilistic process where each vertex is deemed ``occupied" with probability $p$ and ``empty" with probability $1-p$. The main question of interest is whether for a given $p$ there exists a cluster of nodes which spans the entire graph. That model is called \textit{site percolation}, not to be confused with its counterpart called \textit{bond percolation}, in which the edges of a graph are removed with a certain probability and one is interested in studying the subgraph resulting from this process. We note that most of the literature focuses on graphs with infinite number of nodes. 

For an arbitrary node $i\in V$, we denote by $S_i(p)$ the random variable representing the size of the cluster containing $i$ when using occupation probability $p$. One central quantity in percolation theory is the percolation threshold of a graph $G$, defined as
\begin{equation*}
    p_c(G) \triangleq \sup\left\{p\in [0, 1]: \forall i\in V,\; \mathbb{P}(S_i(p) = \infty) = 0\right\}.
\end{equation*}

This is the highest value the occupation probability can take which ensures that no infinite cluster spans the graph. It turns out that this model exhibits a phase transition at $p_c$, meaning that the behavior of the system is very different depending on whether $p<p_c$ or $p>p_c$. In particular in the regime $p<p_c$, the probability of finding an infinite cluster is close to 0, while in the regime $p>p_c$ it is close to 1. For finite-sized graphs there is in general no sharply defined threshold, but a pseudo phase transition happens~\cite{cluster_algo_any_space_dimension}. In that case, denoting by $S_{max}(p) \triangleq \max_{i\in V} S_i(p)$ the size of the largest cluster, we say that $p_c$ is the probability threshold for which $p<p_c$ results in $\frac{\mathbb{E}[S_{max}(p)]}{|V|} \to 0$, while $p>p_c$ yields $\frac{\mathbb{E}[S_{max}(p)]}{|V|} \to 1$.

With these definitions, the link between site percolation and Houdayer moves is straightforward. Consider some function $\mathcal{H}: \X\to\mathbb{R}$ of the form of \cref{equation:problem_statement} with underlying graph $G$. Given two configurations $\config^{(1)}, \config^{(2)}\in \X$, recall that the computation of the local overlap $q_i$ when performing a Houdayer move, cf.~\cref{section:description_houdayer}, amounts to determining where variables agree and disagree between the two configurations. Assuming that each configuration is picked uniformly at random in $\X$ and focusing on a particular site $i\in V$, we thus have
\begin{equation*}
    \mathbb{P}(q_i=-1) = \mathbb{P}(\sigma_i^{(1)}\neq\sigma_i^{(2)}) = \frac12.
\end{equation*}

Hence the formation of Houdayer clusters, which are nothing but groups of connected sites with $q_i=-1$, is explained by a site percolation process with occupation probability $\mathbb{P}(q_i=-1)=\frac12$.

\section{A Houdayer Move Tailored for Markov Chain Monte-Carlo Sampling}\label{appendix:optimal_mcmc_houdayer}
Houdayer moves find applications in various contexts, one of the most well-known being Markov Chain Monte-Carlo sampling. As explained in Section \ref{subsection:improved_sampling}, in order to design a Monte-Carlo move which yields a good exploration of the space of configurations, one should be able to transition to configurations which are at all kinds of Hamming distances, ideally in a uniform way. In particular for the Houdayer move, the space to be explored is the space of all pairs of configurations of variables.

To develop such a move, the number of flips induced by each combination of Houdayer clusters needs to be determined, so that one can choose combinations which lead to pairs of configurations uniformly according to the Hamming distance. This is achieved by encoding the information about the number of flips of each combination in a table, and by then traversing this structure to retrieve a combination of Houdayer clusters which performs a desired number of flips. Interestingly, notice that this problem is a special case of the \textit{subset-sum problem}: one is given a set of numbers (in our case the number of flips incurred by each Houdayer cluster), and seeks to determine the sum of subsets of those numbers (which in our case corresponds to the number of flips incurred by combinations of clusters). This type of problem can be solved in pseudo-polynomial time using dynamic programming~\cite[Section 6.4]{algorithms_design}.

To ease the notation, given a pair of configurations $p \in \XX$, we denote the corresponding set of Houdayer clusters as $\C(p)=\{C_1, C_2, \dots, C_k\}$ where $|\C(p)|=k$, and the number of variables covered by all of them as $l = \sum_{i=1}^k |C_i|$. The move then works as follows:
\begin{enumerate}
    \item Create and fill a table $A$ of size $(k+1) \times (l+1)$ via dynamic programming, where the entry in row $i \in \{1, \dots, k+1\}$ and column $j \in \{1, \dots, l+1\}$ contains the number of combinations $T \subseteq \{C_1, \dots, C_{i-1}\}$ such that $\sum_{C\in T}|C| = j-1$. Thus, each element in the last row $A_{k+1, j}$ with $j\in \{1, \dots, l+1\}$ indicates the number of combinations $T\subseteq \C(p)$ that flip $j$ variables. Note that when $i=1$, we use the convention $\{C_1, \dots, \C_{i-1}\}=\emptyset$.
    \item  Choose $j$ uniformly at random in $\{1, \dots, l+1\}$ and reconstruct one of the combinations achieving $(j-1)$ number of flips uniformly at random by traversing the table. Once the combination of Houdayer clusters is chosen, the rest of the move can be carried out as usual by flipping the values of the variables covered by the selected combination.
\end{enumerate}
More precisely, the first step is done as described in~\cite[Section 6.4]{algorithms_design} using a bottom-up approach to fill the table in $O(kl)$ time~\cite[Equation (6.9)]{algorithms_design}. To retrieve a solution given some column $j$ in the second step, one can adapt the procedure described in~\cite[Equation (6.5)]{algorithms_design} in order to recover uniformly at random one of the solutions which achieves $(j-1)$ flips, taking $O(k)$ time~\cite[Equation (6.10)]{algorithms_design}. A simple example is depicted in Fig. \ref{fig:table_optimal_mcmc_move}.
\begin{figure}
    \centering
    \includegraphics[width=\columnwidth]{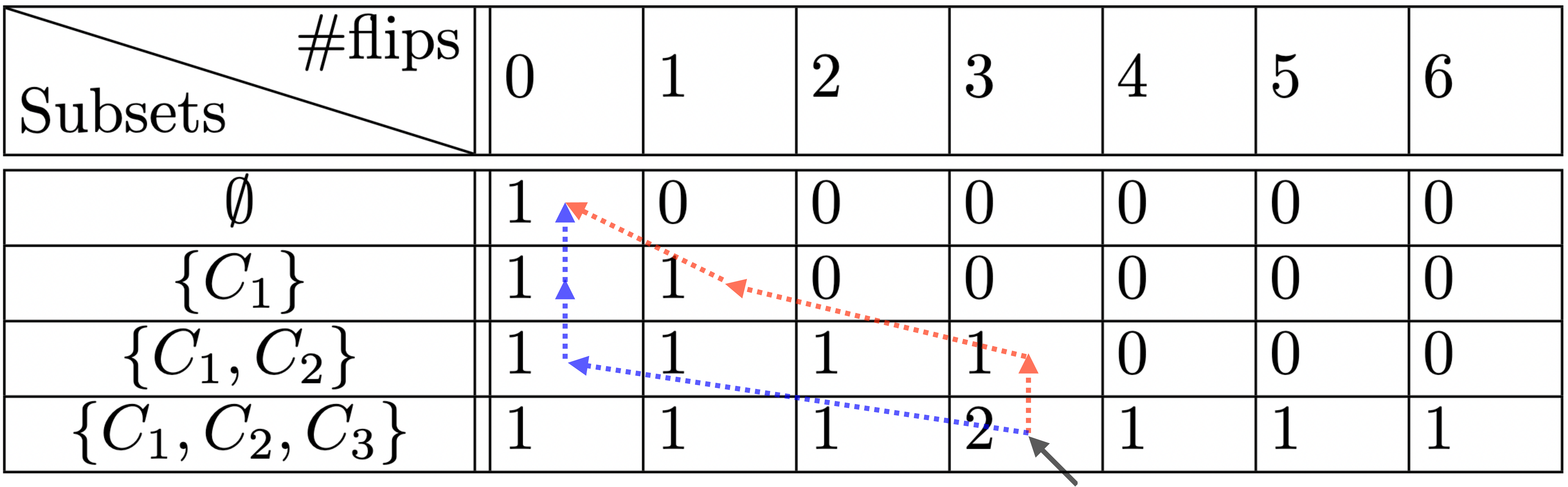}
    \caption{Example of the table resulting from dynamic programming when $\C =\{C_1, C_2, C_3\}$ with $|C_i|=i$. The table encodes the number of flips achievable by all possible combinations. Here, the arrows show how the table is used to trace back the two possible combinations leading to flipping 3 variables. In each path, an arrow going up indicates that a cluster is not chosen, while an arrow going left indicates that a cluster is selected. The red path corresponds to the combination $\{C_1, C_2\}$ and the blue path to combination $\{C_3\}$.}
    \label{fig:table_optimal_mcmc_move}
\end{figure}
\begin{remark}
In practice, selecting $j\in \{1, \dots, l+1\}$ in the second part is prone to result in a failure since it is often the case that no combination achieves $(j-1)$ flips. To prevent such a behavior, one can restrict the choice to be in the set $J\triangleq \{j \in \{1, \dots, l+1\}: A_{k+1, j} > 0\}$, effectively considering only those values which have at least one corresponding combination of Houdayer clusters. Note that even though this forces the move to always return a valid combination, choosing uniformly in $J$ might however break the uniformity with respect to the Hamming distance. This effect can be mitigated by adjusting the way we select from $J$, by for example using a maximum entropy distribution~\cite{max_entropy} which ensures that the mean number of flips is the same as if the choice was uniform in $\{1, \dots, l+1\}$.
\end{remark}
\begin{remark}
Notice that the Hamming distance between pairs can be changed to be the ``true" number of flips $n_{\C}$ as defined in \cref{equation:true_number_flips}. That is, we know from \cref{lemma:extended_houdayer_properties} that $T$ is in fact equivalent to $T^c$, meaning that one can actually cut the table $A_{i, j}$ in half by only filling for values $j\in \{0, \dots, \lfloor l/2\rfloor\}$. Indeed, all combinations achieving the number of flips in the second half of the table are just the complement combinations of those achieving the first half.
\end{remark}
In this case, it is possible to write $\nu_{\C}$ explicitly. Indeed, given the set of Houdayer clusters, one can partition the set of all combinations $2^{\C}$ into sets $\mathcal{S}_i\subseteq \C$ for $i\in\{0, \dots, l\}$, where $\mathcal{S}_i = \{T\subseteq\C : \sum_{C\in T}|C| = i\}$ denotes the set of combinations which cover exactly $i$ variables. With this, selecting a particular combination $T\subseteq \C$ (which covers $t\triangleq\sum_{C\in T}|C|$ variables) requires to first select $j=t+1$ in step 2 when choosing uniformly in $\{1, \dots, l+1\}$, and then select combination the $T$ among those in $\mathcal{S}_t$. Since each choice is taken uniformly at random, we have that
\begin{equation*}
    \nu_{\C}(T) = \frac{1}{(l+1)\cdot|\mathcal{S}_{t}|}.
\end{equation*}
To understand how much of an improvement it can give, this version of the move is compared to other ones in \cref{section:numerical_results}.

\section{Proof of \cref{lemma:extended_houdayer_same_energy}} \label{appendix:proof_same_energy_lemma}

Let us re-state the result for ease of reference.

\begin{lemma2}
Consider a pair of configurations of variables $(\boldsymbol{\sigma}^{(1)}, \boldsymbol{\sigma}^{(2)})$ and suppose that a generalized Houdayer move results in the new pair $(\boldsymbol{\tau}^{(1)}, \boldsymbol{\tau}^{(2)})$. Then we have $\mathcal{H}(\boldsymbol{\sigma}^{(1)}) + \mathcal{H}(\boldsymbol{\sigma}^{(2)}) = \mathcal{H}(\boldsymbol{\tau}^{(1)}) + \mathcal{H}(\boldsymbol{\tau}^{(2)})$.
\end{lemma2}

\begin{proof}
We give a proof for the general case where variables take values in $\{a, b\}$ for some $a, b, \in \mathbb{R}$. Given some variable $\sigma$, we denote by $\bar{\sigma}$ the operation of ``inverting" its value, i.e., $\bar{\sigma} = \begin{cases}
       b &\quad\text{if } \sigma = a,\\
       a &\quad\text{if } \sigma = b.\\
     \end{cases}$
To avoid lengthy equations, we split the function into linear and quadratic terms as
\begin{equation*}
    \mathcal{H}_{l}(\boldsymbol{\sigma}) = - \sum_{i \in V}  h_i \sigma_i
\end{equation*} 
and 
\begin{equation*}
    \mathcal{H}_{q}(\boldsymbol{\sigma}) = -\sum_{(i, j) \in E} J_{ij}\sigma_i \sigma_j,
\end{equation*} 

so that $\mathcal{H}(\boldsymbol{\sigma}) = \mathcal{H}_{q}(\boldsymbol{\sigma}) + \mathcal{H}_{l}(\boldsymbol{\sigma})$. To show that both the linear and quadratic part of the energy function of the input and output pair remain identical, we mainly use the fact that for a site $i\in V$ chosen to be flipped, it must be that $\sigma^{(1)}_i \neq \sigma^{(2)}_i$ since their overlap is assumed to be $q_i=-1$. 

Suppose that a combination of Houdayer clusters $T \subseteq \mathcal{C}(\boldsymbol{\sigma}^{(1)}, \boldsymbol{\sigma}^{(2)})$ is selected to be flipped according to the desired distribution $\nu_{\C(\boldsymbol{\sigma}^{(1)}, \boldsymbol{\sigma}^{(2)})}$. For the linear part, we have
    \begin{align}
        \mathcal{H}_l(&\boldsymbol{\sigma}^{(1)}) + \mathcal{H}_l(\boldsymbol{\sigma}^{(2)}) -  \mathcal{H}_l(\boldsymbol{\tau}^{(1)}) - \mathcal{H}_l(\boldsymbol{\tau}^{(2)}) \nonumber\\
        &= - \sum_{i \in V}  h_i (\sigma^{(1)}_i + \sigma^{(2)}_i - \tau^{(1)}_i - \tau^{(2)}_i) \nonumber\\
        &= -\sum_{C \in T}\sum_{i \in C}  h_i (\sigma^{(1)}_i + \sigma^{(2)}_i - \tau^{(1)}_i - \tau^{(2)}_i) \label{equation:energy_conserved_proof_linear_part} \\
        &= -\sum_{C \in T}\sum_{i \in C}  h_i (\sigma^{(1)}_i + \sigma^{(2)}_i - \bar{\sigma}^{(1)}_i - \bar{\sigma}^{(2)}_i) \nonumber\\
        &= -\sum_{C \in T}\sum_{i \in C}  h_i (\sigma^{(1)}_i + \sigma^{(2)}_i - \sigma^{(2)}_i - \sigma^{(1)}_i) \nonumber\\
        &= 0,\nonumber
    \end{align}
    where in \cref{equation:energy_conserved_proof_linear_part} we use the fact that for sites $v\in V$ which are not in the chosen combination of clusters, the variables remain unchanged.
    
For the quadratic part, note that terms $J_{ij}\sigma_i \sigma_j$ in the sum are altered only for edges $(i, j)\in E$ such that exactly one the constituent node is covered by $T$. Indeed, if none or both nodes are covered by $T$, the overall contribution remains unchanged. By denoting the set of edges inducing a change of energy, namely those at the boundary of $T$, as $\partial T=\{(i, j)\in E: i\text{ is covered by $T$ and $j$ is not covered by $T$}\}$, we have
    \begin{align*}
        \mathcal{H}_q(\boldsymbol{\tau}^{(1)}) +\mathcal{H}_q(\boldsymbol{\tau}^{(2)})- \mathcal{H}_q(\boldsymbol{\sigma}^{(1)}) - \mathcal{H}_q(\boldsymbol{\sigma}^{(2)}) =&\\
        \sum_{(i, j) \in E} J_{ij} \left(\sigma^{(1)}_i \sigma^{(1)}_j + \sigma^{(2)}_i \sigma^{(2)}_j - \tau^{(1)}_i \tau^{(1)}_j  -\tau^{(2)}_i \tau^{(2)}_j \right) =& \\
        \sum_{(i, j) \in \partial T} J_{ij} \left(\sigma^{(1)}_i \sigma^{(1)}_j + \sigma^{(2)}_i \sigma^{(2)}_j - \tau^{(1)}_i \tau^{(1)}_j - \tau^{(2)}_i \tau^{(2)}_j \right)=&\\
        \sum_{(i, j) \in \partial T} J_{ij} \left(\sigma^{(1)}_i \sigma^{(1)}_j + \sigma^{(2)}_i \sigma^{(2)}_j - \bar{\sigma}^{(1)}_i \sigma^{(1)}_j - \bar{\sigma}^{(2)}_i \sigma^{(2)}_j \right) =&\\
        \sum_{(i, j) \in \partial T} J_{ij} \left(\sigma^{(1)}_i \sigma^{(1)}_j + \sigma^{(2)}_i \sigma^{(2)}_j - \sigma^{(2)}_i \sigma^{(1)}_j - \sigma^{(1)}_i \sigma^{(2)}_j \right) =&\\
        \sum_{(i, j) \in \partial T} J_{ij} \left(\sigma^{(1)}_i - \sigma^{(2)}_i\right) \left(\sigma^{(1)}_j - \sigma^{(2)}_j\right)=&\\
        0=,
    \end{align*}
    where in the last equality, we use the fact that $\sigma^{(1)}_j = \sigma^{(2)}_j$. Indeed it must be that the site overlap at $j$ is equal to $q_j=1$ or otherwise it would be part of Houdayer cluster containing $i$ since $(i, j)\in E$. We can thus conclude that
    \begin{align*}
        \mathcal{H}_q(\boldsymbol{\sigma}^{(1)}) + \mathcal{H}_q(\boldsymbol{\sigma}^{(2)}) =  \mathcal{H}_q(\boldsymbol{\tau}^{(1)}) + \mathcal{H}_q(\boldsymbol{\tau}^{(2)}) .
    \end{align*}
Combining both parts of the proof yields the result.
\end{proof}

\section{Proof of \cref{lemma:extended_houdayer_properties}} \label{appendix:proof_extended_houdayer_properties}

For reference, we re-state the result to be proven.

\begin{lemma2}
Consider a pair of configurations of variables $\left(\boldsymbol{\sigma}^{(1)}, \boldsymbol{\sigma}^{(2)}\right)$ such that their local overlap induces the set of Houdayer clusters $\mathcal{C}$. Then, we have that:
\begin{enumerate}
    \item There are $2^{|\mathcal{C}|}$ unique pairs of configurations (including the initial one) that can be reached.
    
    \item For any pair $\left(\boldsymbol{\tau}^{(1)}, \boldsymbol{\tau}^{(2)}\right)$ that is reachable via some combination of Houdayer clusters $T\subseteq \mathcal{C}$, the pair $\left(\boldsymbol{\tau}^{(2)}, \boldsymbol{\tau}^{(1)}\right)$ is also reachable, using the ``complement" combination $T^c \triangleq \mathcal{C} \setminus T \subseteq \mathcal{C}$.
\end{enumerate}
\end{lemma2}

\begin{proof}
In this case, one can choose any combination of clusters out of the $|\mathcal{C}|$ available ones, and each combination gives rise to a new unique pair. Considering the empty combination (which yields back the input pair) to be valid, the number of possible combinations is given by

\begin{equation*}
    \sum_{i=0}^{|\mathcal{C}|} \binom{|\mathcal{C}|}{i} = 2^{|\mathcal{C}|},
\end{equation*}

which is also the number of unique pairs that can be reached.

To prove the second part of the lemma, consider a combination of clusters $T \subseteq \mathcal{C}$ to be flipped. Since during a Houdayer move, the only sites at which variables can see their values change is covered by $\mathcal{C}$, we restrict our attention to them. To simplify the notation, we write $\boldsymbol{\sigma}_T$ to index variables of some configuration $\boldsymbol{\sigma}$ belonging to sites covered by the combination of clusters $T\subseteq \mathcal{C}$.

Computing the elements of the first configuration $\boldsymbol{\tau}^{(1)}$ in the new pair after swapping, we have that variables covered by $T$ now take values $\boldsymbol{\sigma}_T^{(2)}$, while variables covered by $ \mathcal{C} \setminus T$ remain unchanged, equal to  $\boldsymbol{\sigma}_{\mathcal{C} \setminus T}^{(1)}$. On the other hand for $\boldsymbol{\tau}^{(2)}$, variables covered by $T$ now have values of $\boldsymbol{\sigma}_{T}^{(1)}$, while variables covered by $ \mathcal{C} \setminus T$ keep their values, equal to $\boldsymbol{\sigma}_{\mathcal{C} \setminus T}^{(2)}$.

Considering now the combination of Houdayer clusters $T^c$, we can similarly work out the resulting pair. The first element of the resulting pair is such that variables covered by $\mathcal{C} \setminus T$ are now equal to $\boldsymbol{\sigma}_{\mathcal{C} \setminus T}^{(2)}$ while values covered by $T$ keep their values $\boldsymbol{\sigma}_{T}^{(1)}$, which is overall identical to $\boldsymbol{\tau}^{(2)}$ computed before. Similarly, we get that the second element of the resulting pair is $\boldsymbol{\tau}^{(1)}$, which concludes the proof.
\end{proof}

\section{Full Derivation for \cref{example:1d_graph}}\label{appendix:1d_graph}

In one dimension, site percolation with occupation probability $p\in [0, 1]$ can be solved analytically when assuming $|V| \to +\infty$. In this case, one has that~\cite[Section 2.2]{intro_percolation_theory}:
\begin{itemize}
    \item The percolation threshold is $p_c=1$.
    \item For an arbitrary site, writing $S(p)$ for the random variable denoting the size of the cluster containing it, we have $\mathbb{P}(S(p) = s) = s(1-p)^2p^{s-1}$.
    \item The mean cluster size is $\mathbb{E}[S(p)] =\frac{1+p}{1-p}$.
\end{itemize}
Moreover, the cluster size can be shown to be sharply concentrated at its mean value since for $\varepsilon\geq \mathbb{E}[S(p)]$ we have
\begin{align}
    \mathbb{P}(|S(p)-\mathbb{E}[S(p)]|\geq \varepsilon) &= \mathbb{P}(S(p)-\mathbb{E}[S(p)]\geq \varepsilon)\nonumber \\
    &=\mathbb{P}(S(p)\geq \mathbb{E}[S(p)]+\varepsilon)\nonumber \\
    &=1- \sum_{s=0}^{\varepsilon+\mathbb{E}[S(p)]-1} s(1-p)^2p^{s-1}\nonumber\\
    &= p^{\frac{2p}{1-p}+\varepsilon}\left(1+2p+(1-p)\varepsilon\right)\label{equation:cluster_size_concentration},
\end{align}
which follows an exponential decrease as a function of $\varepsilon$.

Recall from \cref{appendix:percolation_theory} that the formation of Houdayer clusters can be seen as a site percolation process with occupation probability $p=1/2$ when the distribution $\mu$ over pairs is uniform. Consequently in this case, \cref{equation:cluster_size_concentration} yields 
\begin{equation*}
    \mathbb{P}(|S(1/2)-\mathbb{E}[S(1/2)]|\geq \varepsilon) = \frac{4+\varepsilon}{2^{3+\varepsilon}},
\end{equation*}
meaning that all clusters have size $\mathbb{E}[S(1/2)]=3$ with high probability. With this, the number of variables covered by a certain combination $T\subseteq \C$ of Houdayer clusters is given by $|T|\cdot\mathbb{E}[S(1/2)]$  with high probability so that the expected number of flips reads
\begin{align*}
    \eta_{\mu} &= \sum_{T\subseteq \C} \nu_{\C}(T)\cdot\min\left(3|T|, 3(|\C|-|T|)\right)\\
    &= 3\sum_{i=0}^{|\C|}\left(\mathop{\sum_{T\subseteq \C:}}_{|T|=i}\nu_{\C}(T)\right)\cdot\min\left(i, |\C|-i\right).
\end{align*}

To recover an equation which depends on $V$ rather than $\C$, we use the fact that the number of variables with local overlap $-1$ is equal to $|V|p$ in expectation, so that $|V|p = \sum_{C\in\C}|C| = \sum_{C\in\C}\mathbb{E}[S(p)]$, or equivalently $|\C| = \frac{|V|p}{\mathbb{E}[S(p)]}$. In particular:
\begin{itemize}
    \item For the distribution $\nu_{\mathcal{C}}(T) = \mathbbm{1}\{T\in \mathcal{T}_1\} \cdot \frac{\sum_{C\in T} |C|}{\sum_{T\in \mathcal{T}_1}\sum_{C\in T} |C|}$ (original Houdayer move), we have
    \begin{equation*}
        \eta_{\mu} = 3\left(\mathop{\sum_{T\subseteq\C:}}_{|T|=1}\nu_{\C}(T)\right)\cdot\min\left(1, |\C|-1\right) = 3.
    \end{equation*}
    \item For the distribution $\nu_{\mathcal{C}}(T) = \frac{1}{2^{|\C|}}$ (extended Houdayer move), we have
    \begin{equation*}
        \eta_{\mu} = \frac{3}{2^{|\C|}}\sum_{i=0}^{|\C|}\binom{|\C|}{i}\cdot\min\left(i, |\C|-i\right) = \Theta(|V|).
    \end{equation*}
    Indeed, one can upper-bound this expression in terms of the number of nodes using
    \begin{align*}
        \frac{3}{2^{|\C|}}\sum_{i=0}^{|\C|}\binom{|\C|}{i}\cdot\min\left(i, |\C|-i\right) &\leq \frac{3}{2^{|\C|}}\sum_{i=0}^{|\C|}i\binom{|\C|}{i} \\
        &= \frac{3}{2^{|\C|}}\cdot|\C|\cdot 2^{|\C|-1}\\
        &= \frac32 |\C|\\
        &= \frac{|V|}{4}.
    \end{align*}
    To find a lower-bound, we use the fact that 
    \begin{align}
        \sum_{i=0}^{|\C|}\binom{|\C|}{i}\Big||\C|-2i\Big| &= \sum_{i=0}^{|\C|}\sqrt{\binom{|\C|}{i}(|\C|-2i)^2}\sqrt{\binom{|\C|}{i}}\nonumber\\
        &\leq \sqrt{\sum_{i=0}^{|\C|} \binom{|\C|}{i}(|\C|-2i)^2}\cdot \sqrt{\sum_{i=0}^{|\C|} \binom{|\C|}{i}}\label{equation:cauchy_schwartz}\\
        &=\sqrt{|\C|\cdot2^{|\C|}} \cdot \sqrt{2^{|\C|}}\label{equation:binomial_i_squared}\\
        &= 2^{|\C|}\sqrt{|\C|}\nonumber
    \end{align}
    where \cref{equation:cauchy_schwartz} follows from the Cauchy-Schwartz inequality and \cref{equation:binomial_i_squared} is established by repeatedly using the identity $i\binom{|\C|}{i}=|\C|\binom{|\C|-1}{i-1}$. Moreover since $\min(a, b) = \frac12(a+b -|a-b|)$ for $a, b\in \mathbb{R}$, we have
    \begin{align*}
        \frac{3}{2^{|\C|}}\sum_{i=0}^{|\C|}&\binom{|\C|}{i}\cdot\min\left(i, |\C|-i\right) \\
        &= \frac{3}{2^{|\C|}}\sum_{i=0}^{|\C|}\binom{|\C|}{i}\frac12\left(|\C|-\Big||\C|-2i\Big|\right)\\
        &\geq \frac{3}{2^{|\C|+1}}\left(\left(\sum_{i=0}^{|\C|}\binom{|\C|}{i}|\C|\right)-2^{|\C|}\sqrt{|\C|}\right)\\
        &= \frac{3}{2^{|\C|+1}}\left(2^{|\C|}|\C|-2^{|\C|}\sqrt{|\C|}\right)\\
        &= \frac{3|\C|}{2}\left(1-\sqrt{\frac{1}{|\C|}}\right)\\
        &=\frac{|V|}{4}\left(1-\sqrt{\frac{6}{|V|}}\right).
    \end{align*}
\end{itemize}

\bibliography{main}

\end{document}